\theoremstyle{plain}
\newtheorem{theorem}{Theorem}
\newtheorem{proposition}[theorem]{Proposition}
\theoremstyle{remark}
\newtheorem{remark}{Remark}
\theoremstyle{definition}
\newtheorem{definition}{Definition}
\theoremstyle{definition}
\newtheorem{assumption}{Assumption}
\newcommand{\Tr}{\operatorname{Tr}}
\newcommand{\rmd}{\mathrm{d}}
\newcommand{\rme}{\mathrm{e}}
\newcommand{\rmi}{\mathrm{i}}
\newcommand{\RE}{\mathrm{\,Re\,}}
\newcommand{\IM}{\mathrm{\,Im\,}}
\newcommand{\Rbb}{\mathbb{R}}
\newcommand{\Cbb}{\mathbb{C}}
\newcommand{\Sbb}{\mathbb{S}}
\newcommand{\Pbb}{\mathbb{P}}
\newcommand{\Ibb}{\mathbb{I}}
\newcommand{\T}{\mathtt{T}}
\newcommand{\Qbb}{\mathbb{Q}}
\newcommand{\Ebb}{\operatorname{\mathbb{E}}}
\newcommand{\openone}{\mathds{1}}
\newcommand{\ind}{\mathtt{1}}
\newcommand{\id}{\mathrm{Id}}
\newcommand{\norm}[1]{\left\Vert#1\right\Vert}
\newcommand{\abs}[1]{\left\vert#1\right\vert}
\newcommand{\Acal}{\mathcal{A}}
\newcommand{\Bcal}{\mathcal{B}}
\newcommand{\Gcal}{\mathcal{G}}
\newcommand{\Ical}{\mathcal{I}}
\newcommand{\Kcal}{\mathcal{K}}
\newcommand{\Lcal}{\mathcal{L}}
\newcommand{\Tcal}{\mathcal{T}}
 \newcommand{\Ascr}{\mathscr{A}}
\newcommand{\Bscr}{\mathscr{B}}
 \newcommand{\Escr}{\mathscr{E}}
\newcommand{\Fscr}{\mathscr{F}}
\newcommand{\Hscr}{\mathscr{H}}
\newcommand{\Sscr}{\mathscr{S}}
\newcommand{\Tscr}{\mathscr{T}}
\newcommand{\Uscr}{\mathscr{U}}
\newcommand{\Xscr}{\mathscr{X}}
 \newcommand{\Eo}{\mathsf{E}}
\newcommand{\Fo}{\mathsf{F}}
\newcommand{\Uo}{\mathsf{U}}
\newcommand{\marginlabel}[1]{\mbox{}\marginpar{#1}}
\begin{document}
\title{Markovian dynamics for a quantum/classical system \\ and quantum trajectories}

\author{Alberto Barchielli \footnote{also
Istituto Nazionale di Alta Matematica (INDAM-GNAMPA)} \\ Istituto Nazionale di Fisica Nucleare (INFN), Sezione di Milano, Italy }

\maketitle

\begin{abstract}
Quantum trajectory techniques have been used in the theory of open systems as a starting point for numerical computations and to describe the monitoring of a quantum system in continuous time. We extend this technique to develop a general approach to the dynamics of quantum/classical hybrid systems. By using two coupled stochastic differential equations, we can describe a classical component and a quantum one which have their own intrinsic dynamics and which interact with each other. A mathematically rigorous construction is given, under the restriction of having a Markovian joint dynamics and of involving only bounded operators on the Hilbert space of the quantum component. An important feature is that, if the interaction allows for a flow of information from the quantum component to the classical one, necessarily  the dynamics is dissipative. We show also how this theory is connected to a suitable hybrid dynamical semigroup, which reduces to a quantum dynamical semigroup in the purely quantum case and includes Liouville and Kolmogorov-Fokker-Planck equations in the purely classical case. Moreover, this semigroup allows to compare the proposed stochastic dynamics with various other proposals based on hybrid master equations. Some simple examples are constructed in order to show the variety of physical behaviours which can be described; in particular, a model presenting hidden entanglement is introduced.
\end{abstract}

\vspace{2pc}
\noindent{\it Keywords}: hybrid dynamics, quantum trajectories, positive operator valued measures, hybrid dynamical semigroup
%
% Uncomment for Submitted to journal title message

\tableofcontents
%\maketitle

\section{Introduction}

The interest in quantum/classical hybrid systems is old, see for instance \cite{DGS00,DamWer23,Bar23,BarW23,Op+22,Dio23,ManRT23,Opp+23,Sergi+23,Pomar+23,Tronci23,BriU23,Dio14,LOW22} and references therein. One of the main motivations of the study of hybrid systems is that the output of a monitored system is classical; then, implicitly or explicitly, the dynamics of quantum/classical systems is involved in the theory of quantum measurements in continuous time and quantum filtering \cite{Bar86,Bar93,BarG09,Bar06,DGS00,Dio14,BarB91,Hol01,ZolG97,WisM10,Bel88,Bel89,BarPZ98,Op+22,Dio23,Till24,BGM04,Maa23}. Moreover, hybrid systems could be used as an approximation to complicate quantum systems, as an effective theory \cite{DGS00,Op+22,ManRT23,Tronci23,BriU23,Pomar+23,Mim24}. Another motivation is in the connections between a theory of gravity and quantum theories; perhaps gravity is ``classical'', and a dynamical theory of gravity and matter needs a hybrid dynamics \cite{HallR18,OppWD22,Opp+23grav,Op+22,DGS00,ProsB23,Till24,LOW22}. Classical environments could be involved even in revivals of quantum entanglement \cite{LoFC18,BarGent}. A common feature in many of these attempts is that a non-trivial hybrid dynamics is always irreversible, due to the flow of information from the quantum system  to the classical one \cite{DGS00,Opp+23,Op+22,Opp+23grav,Bar23,BarW23,DamWer23,LOW22,Dio23}.

The main aim of this article is to construct a consistent formulation of the dynamics of quantum/classical hybrid systems, by connecting the quantum trajectory theory, used to describe open and monitored quantum systems
\cite{BarH95,BarG12,BarG09,WisM10,Bel88,Bel89,BarPZ98}, to the classical theory of Markov processes and stochastic differential equations \cite{Met82,IWat89,LipSmart,DaPZ14,Prott04,App09,EthK05}. Some proposals of connecting quantum trajectories to hybrid dynamics were given in \cite{LOW22,Till24,Dio23}. The dynamics of the hybrid systems is given in Sec.\ \ref{sec:SDE} in terms of coupled stochastic differential equations; only the Markovian case is considered. We also show how this approach is connected to notions appearing in the general formulation of quantum measurements, such as ``positive operator valued measures'' and ``instruments'' \cite{WisM10,Hol01,BarG09}.

In Sec.\ \ref{sec:hds} we show how to associate a suitable semigroup to the constructed dynamics. This hybrid semigroup reduces to a quantum dynamical semigroup \cite{Hol01,BarG09,WisM10} in the case of a pure quantum system, while in the pure classical case it includes the Liouville and the Kolmogorov-Fokker-Planck equations \cite{App09}. We give also the formal generator of this semigroup, which allows to compare the class of dynamics developed here with the ones obtained by other approaches. This generator also shows analogies and links with the theory of infinitely divisible distributions \cite{App09,Sato99,Hol88QPIII,BarW23}.

To illustrate the possibilities of the constructed dynamics, various examples are presented in Sec.\  \ref{sec:ex}. In particular a two qubit system is considered and analogies with the phenomenon of hidden entanglement \cite{LoFBen14,LoFC18} are discussed. Some final conclusions are given in Sec.\ \ref{sec:concl}.

We end this section by some preliminary notions and useful notations.

\subsection{Some notations}\label{sec:notations}

The quantum component \marginlabel{$\Hscr$, $\Bscr(\Hscr)$} will be represented in a separable complex Hilbert space $\Hscr$. The bounded linear operators are denoted  by  $\Bscr(\Hscr)$  \marginlabel{$\Tscr(\Hscr)$, $\openone$}  and the trace class by $\Tscr(\Hscr)$; the unit element in $\Bscr(\Hscr)$ is denoted by $\openone$ \marginlabel{$a,a^\dagger$, $\alpha,\overline{\alpha}$} and the adjoint of the operator $a\in\Bscr(\Hscr)$ by $a^\dagger$.
The complex conjugated of $\alpha\in \Cbb$ is denoted by $\overline \alpha$.  We shall use also the notation \marginlabel{$\langle \rho, a\rangle$}
\[
\langle \rho, a\rangle=\Tr\{ \rho a\}, \qquad \rho\in \Tscr(\Hscr), \quad a\in \Bscr(\Hscr).
\]
By $\Sscr(\Hscr)\subset \Tscr(\Hscr)$ \marginlabel{$\Sscr(\Hscr)$} we denote the set of the statistical operators (the \emph{quantum states}). Then, general measurements are represented by \emph{positive operator valued measures} and \emph{instruments} \cite{BarG09,WisM10,Hol01}.

\begin{definition} \label{def:instr} Let $\Xscr$ be a set and $\Sigma_\Xscr$  a $\sigma$-algebra of subsets of $\Xscr$. An \emph{instrument} $\Ical(\cdot)$ with value space $(\Xscr,\Sigma_\Xscr)$ is a function from the $\sigma$-algebra $\Sigma_\Xscr$  to the linear maps on $\Bscr(\Hscr)$ such that: \quad (i) (positivity) for every set $\Eo\in \Sigma_\Xscr$, $\Ical(\Eo)$ is a completely positive and normal map from $\Bscr(\Hscr)$ into itself;
\quad (ii) ($\sigma$-additivity)  we have $\Ical\left(\bigcup_i \Eo_i\right)[a]=\sum_i\Ical(\Eo_i)[a]$, $\forall a\in \Bscr(\Hscr)$ and for every countable family of disjoint sets $\Eo_i\in \Sigma_\Xscr$; \quad (iii) (normalization) $\Ical(\Xscr)[\openone ]=\openone$.
\end{definition}
``Normality'' is a suitable regularity requirement; for a positive map it is equivalent to require the map to be the adjoint of a bounded map on $\Tscr(\Hscr)$.
Let us recall that an instrument gives the probabilities for the observed values and the conditional state after the measurement. The quantity $\Ical(\cdot)[\openone]$ is a positive operator valued measure, and it gives only the probabilities. Moreover, measurements on a quantum system can be interpreted as involving hybrid systems: a positive operator valued measure is a channel from a quantum system to a classical one, an instrument is a channel from a quantum system to a hybrid system \cite{BarL08,Dio14,DamWer23,ManRT23}.
When \marginlabel{$\Rbb^s$} $\Xscr=\Rbb^s$  it is usual to take for $\Sigma_\Xscr$ the $\sigma$-algebra of the Borel sets, denoted by $\Bcal(\Rbb^s)$. \marginlabel{$\Bcal(\Rbb^s)$}

We take a classical component with values in $\Rbb^s$, the classical \emph{phase space} \cite{DamWer23,BarW23}. The choice of observable and state spaces in the classical case is more involved \cite[Sec.\ 1.1]{DamWer23}. To introduce the semigroups associated to the Markov processes on which our classical component will be based, it is convenient to use  the space $B_{\rm b}(\Rbb^s)$ \marginlabel{$B_{\rm b}(\Rbb^s)$} of bounded, Borel measurable complex functions
on $\Rbb^s$ \cite{EthK05,Prott04,App09} and its subspace $C_{\rm b}(\Rbb^s)$ \marginlabel{$C_{\rm b}(\Rbb^s)$}  of the continuous bounded functions; both spaces are  Banach spaces under the supremum  norm \cite[p.\ 6]{App09}.
We consider complex functions instead of only real ones, because it is more convenient in constructing hybrid systems.  Inside a $C^*$-algebraic approach, $C_{\rm b}(\Rbb^s)$ was one of the suggestions for hybrid systems given in \cite[p. 7]{DamWer23}.

\section{Hybrid dynamics and stochastic differential equations}\label{sec:SDE}
In the quantum theory of measurements in continuous time, many conceptual and technical results were obtained by the quantum trajectory approach, based on stochastic differential equations (SDEs) \cite{Hol01,BarH95,BarPZ98,BarG09,BarG12,WisM10}.
By extending this technique, it is possible to develop a quantum trajectory formulation of a very general class of hybrid dynamics.

The construction is based on two probability measures: a \emph{reference probability} $\Qbb$ and a \emph{physical probability} $\Pbb$. The physical probability will be constructed in Sec.\ \ref{sec:nprob}, in terms of $\Qbb$ and of the quantum/classical dynamics.

In the next Assumption we introduce the reference probability and the Wiener and Poisson processes needed in the construction. They are defined in a filtered probability space satisfying the \emph{usual hypothesis}, typically assumed in books on stochastic calculus  \cite[p.\ 3 and Sec.\ I.5]{Prott04}.

\begin{assumption}\label{Ass:processes}
Let $\big(\Omega, \Fscr, \Qbb\big)$ \marginlabel{$\left(\Omega, \Fscr, \Qbb\right)$} be a \emph{complete probability space} with a \emph{filtration} of $\sigma$-algebras $\left\{\Fscr_t, t\geq 0\right\}$, \marginlabel{$\Fscr_t,\;\Fscr_{t_-}$} such that \ (a) \ $\Fscr= \bigvee_{t\geq 0} \Fscr_t$, \ (b) \ $\Fscr_t= \bigcap_{r> t}\Fscr_r$, \ (c) \ $\Fo\in\Fscr, \ \Qbb(\Fo)=0 \ \Rightarrow \ \Fo\in \Fscr_0$; we set also $\Fscr_{t_-}=  \bigvee_{0\leq r<t} \Fscr_r$ \cite[pp.\ 3-4]{Met82}, \cite[p.\ 45]{IWat89}.

Let $W_k$, $k=1,\ldots,d$, \marginlabel{$W_k(t)$} be continuous versions of independent, adapted, standard \emph{Wiener processes},  with increments independent of the past \cite[Sec.\ I.7]{IWat89}.

Let $\Uscr$ \marginlabel{$(\Uscr,\Bcal(\Uscr))$} be a locally compact Hausdorff space with a topology with a countable basis and $\Bcal(\Uscr)$   be the Borel $\sigma$-algebra of $\Uscr$. Let $\nu$ \marginlabel{$\nu(\rmd u)$, $\mathsf{U}_0$} be a Radon measure on $(\Uscr,\Bcal(\Uscr))$ with  the property:
\begin{equation}
\exists \mathsf{U}_0 \in \Bcal(\Uscr): \ \nu(\Uscr\setminus \mathsf{U}_0)<+\infty.
\end{equation}
Finally, $\Pi(\rmd u,\rmd t)$ \marginlabel{$\Pi(\rmd u,\rmd t)$} is
an adapted  \emph{Poisson point process} on $\Uscr\times \Rbb_+$ of intensity $\nu(\rmd u)\rmd t$; $\Pi$ is independent of the Wiener processes and with increments independent of the past \cite[Sec.\ 1]{BarPZ98}, \cite[Secs.\ I.8, I.9]{IWat89}.
\end{assumption}

It is also useful to introduce the \marginlabel{$\widetilde \Pi(\rmd u,\rmd t)$} \emph{compensated Poisson measure}:
\[
\widetilde \Pi(\rmd u,\rmd t):=\Pi(\rmd u,\rmd t)-\nu(\rmd u)\rmd t.
\]
Stochastic integrals with respect to the Wiener and Poisson processes are defined, for instance, in \cite{Met82,IWat89,LipSmart,DaPZ14,Prott04}.
In particular, the stochastic integrals with respect to $\Pi(\rmd u,\rmd t)$ and with respect to $\widetilde \Pi(\rmd u,\rmd t)$ are defined in \cite[Sec.\ II.3]{IWat89}, where also the classes of possible integrands are introduced.
In the following we shall make use of It\^o's formula and generalizations; see
 \cite[Sec.\ II.5]{IWat89}, \cite[Sec.\ I.4]{LipSmart}, \cite[Sec.\ II.7]{Prott04}. The simpler case of a finite collection of Poisson processes is presented in Sec.\ \ref{sec:Udiscr}.

We shall denote by $\Ebb_\Qbb[X]$ \marginlabel{$\Ebb_\Qbb$}the mean value of the random variable $X$ with respect to the probability measure $\Qbb$. As a real random variable is a measurable function from $\Omega$ to $\Rbb$ we can write $X: \omega \in \Omega \to X(\omega)\in \Rbb$; then, the mean value can be written as $\Ebb_\Qbb[X]=\int_\Omega X(\omega) \Qbb(\rmd \omega)$.

\subsection{The classical component}
The classical component of the hybrid system is the stochastic process $X(t)$ taking values in the phase space $\Rbb^s$ and satisfying the  stochastic differential equation
\begin{multline}
\rmd X_i(t) =c_i \big(X(t_-)\big)\rmd t +\sum_{k=1}^{d}b_{i}^k\big(X(t_-)\big) \rmd W_k(t)+\int_{\mathsf{U}_0} g_{i}\big(X(t_-),u\big) \widetilde\Pi(\rmd u,\rmd t)
\\ \label{Xprocess} {}
+ \int_{\Uscr\setminus \mathsf{U}_0} g_{i}\big(X(t_-),u\big) \Pi(\rmd u,\rmd t), \qquad i=1,\ldots, s.
\end{multline}
When considering stochastic processes with diffusion and jumps, it is useful to fix some regularity properties for their trajectories. We always understand to have \emph{regular right continuous}  processes, which means processes adapted to the given filtration $\{\Fscr_t\}_{t\geq 0}$,  with trajectories continuous from the right and with limits from the left \cite[Def.\ 1.5]{Met82}; indeed, $X(t_-)$ means the limit from the left.

\begin{assumption}[Growth and Lipschitz conditions]\label{ass:classical}
The $b^k$ and  $c$ are Borel measurable functions from $\Rbb^s$ to $\Rbb^s$; the $g$ are measurable functions from $\Rbb^s\times \Uscr$ into $\Rbb^s$. We have also, for some $K>0$, and $\forall x, y \in \Rbb^s$,
\begin{equation*}
\norm{c(x)}^2+\sum_k\norm{b^k(x)}^2 +\int_{\mathsf{U}_0}\norm{g(x,u)}^2 \nu(\rmd u)\leq K\left(1+\norm{x}^2\right),
\end{equation*}
\begin{equation*}
\norm{c(x)-c(y)}^2+\sum_k\norm{b^k(x)- b^k(y)}^2 +\int_{\mathsf{U}_0}\norm{g(x,u)-g(y,u)}^2 \nu(\rmd u)\leq K\norm{x-y}^2.
\end{equation*}
\end{assumption}
\begin{proposition}\label{prop:clpr}
Under Assumptions \ref{Ass:processes} and \ref{ass:classical}, given a random ($\Fscr_0$-measurable) initial condition $X(0)$, there is a unique regular right continuous process solving \eqref{Xprocess}; moreover, it turns out to be a time-homogeneous Markov process. In particular, $\forall x\in \Rbb^s$ there is a unique solution $X^x(t)$ \marginlabel{$X^x(t)$} with deterministic initial condition $X^x(0)=x$.
\end{proposition}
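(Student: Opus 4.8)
The plan is to prove existence and uniqueness by the classical method of successive approximations (Picard iteration), using the Lipschitz and growth bounds of Assumption~\ref{ass:classical}, with the finite-activity large-jump term handled separately by interlacing; the Markov property and time homogeneity will then be deduced from pathwise uniqueness together with the stationarity and independence of the driving noise. Throughout I would fix a finite horizon $[0,T]$ and localize if necessary, since the global bounds follow by iterating over successive intervals.

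First I would dispose of the large-jump integral $\int_{\Uscr\setminus\mathsf{U}_0} g_i\big(X(t_-),u\big)\,\Pi(\rmd u,\rmd t)$. Because $\nu(\Uscr\setminus\mathsf{U}_0)<+\infty$, the restriction of $\Pi$ to $(\Uscr\setminus\mathsf{U}_0)\times[0,T]$ has, $\Qbb$-almost surely, only finitely many points, occurring at isolated random times $0<\tau_1<\tau_2<\cdots$. On each interval $[\tau_{n},\tau_{n+1})$ the process solves the \emph{reduced} equation obtained by dropping the last integral in \eqref{Xprocess}, and at each $\tau_n$ one adds the jump $g\big(X(\tau_n{}_-),u_n\big)$, where $u_n$ is the associated mark. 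It therefore suffices to solve the reduced equation on a generic interval and then paste the pieces together; this is unambiguous and yields an RRC adapted process, and the reduced equation involves only the drift, the Wiener integrals, and the compensated small-jump integral over $\mathsf{U}_0$, all square-integrable under the growth condition.

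For the reduced equation I would work in the Banach space of RRC adapted processes $Y$ with $\norm{Y}_T^2:=\Ebb_\Qbb\big[\sup_{t\le T}\norm{Y(t)}^2\big]<\infty$ and define the Picard map
\[
\Phi(Y)(t)=X(0)+\int_0^t c\big(Y(s)\big)\,\rmd s+\sum_{k=1}^d\int_0^t a^k\big(Y(s)\big)\,\rmd W_k(s)+\int_0^t\!\!\int_{\mathsf{U}_0} g\big(Y(s_-),u\big)\,\widetilde\Pi(\rmd u,\rmd s),
\]
setting $X^{(0)}(t)=X(0)$ and $X^{(n+1)}=\Phi(X^{(n)})$. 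The key a~priori estimate combines Doob's maximal inequality for the martingale (Wiener plus compensated-Poisson) part, the It\^o isometry, and the compensated-Poisson isometry
\[
\Ebb_\Qbb\!\left[\abs{\int_0^t\!\!\int_{\mathsf{U}_0} f(s,u)\,\widetilde\Pi(\rmd u,\rmd s)}^2\right]=\Ebb_\Qbb\!\left[\int_0^t\!\!\int_{\mathsf{U}_0}\abs{f(s,u)}^2\,\nu(\rmd u)\,\rmd s\right],
\]
together with the growth bound of Assumption~\ref{ass:classical}; Gronwall's lemma then gives a bound on $\norm{X^{(n)}}_T$ uniform in $n$, and the Lipschitz bound makes $\big(X^{(n)}\big)$ a Cauchy sequence. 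The limit solves the reduced equation, and the same Lipschitz-plus-Gronwall estimate applied to the difference of two solutions yields pathwise uniqueness.

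Finally, time homogeneity and the Markov property follow from pathwise uniqueness. Conditionally on $\Fscr_s$, the shifted noise $W_k(s+\cdot)-W_k(s)$ and the restriction of $\Pi$ to $\Uscr\times(s,\infty)$ are independent of $\Fscr_s$ and have the same law as the original noise shifted to the origin; since the coefficients $c,a^k,g$ do not depend on time, the solution on $[s,\infty)$ is, by uniqueness, the unique solution of the \emph{same} equation started from $X(s)$ and driven by this shifted noise. This gives simultaneously the Markov property and time homogeneity, and specializing to the deterministic initial condition $X^x(0)=x$ produces the stated solutions $X^x(t)$. The main obstacle is the bookkeeping of combining the $L^2$ fixed-point argument for the compensated small-jump part with the pathwise interlacing of the finitely many large jumps, and verifying that the glued process is again RRC and adapted and that the uniqueness obtained for the reduced equation transfers to the full equation \eqref{Xprocess}.
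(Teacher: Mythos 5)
Your proposal is correct and is essentially the proof the paper relies on: the paper disposes of this proposition by citing \cite[Sec.~IV.9]{IWat89}, where existence, uniqueness, and the time-homogeneous Markov property are established by precisely your combination of Picard iteration under the Lipschitz/growth bounds (Doob, It\^o and compensated-Poisson isometries, Gronwall), interlacing of the a.s.\ finitely many large jumps coming from $\nu(\Uscr\setminus\mathsf{U}_0)<+\infty$, and the deduction of Markovianity and time homogeneity from pathwise uniqueness together with the stationary, independent increments of the driving noise. There is nothing substantive to flag beyond the bookkeeping you already acknowledge (square-integrability of the random initial condition handled by localization/patching).
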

The proof of this Proposition is given in \cite[Sec.\ IV.9]{IWat89}; see, in particular, Theorem 9.1 of \cite{IWat89}. Note that no condition is given on $g(x,u)$ for $u\in \Uscr\setminus \Uo_0$, apart from being jointly measurable in $(x,u)$. This is due to the fact that both $\Pi$ and $\nu$ are finite measures on $\Uscr\setminus \Uo_0$.

It is usual to call \emph{transition probability} (or transition function) \cite{EthK05,App09,Prott04} the function on $[0,+\infty) \times \Rbb^s \times \Bcal(\Rbb^s )$, defined by
\begin{equation}\label{transprob}
P(t,x,\Eo)=\Qbb[X^x(t)\in\Eo], \qquad t\in[0,+\infty), \quad x\in\Rbb^s, \quad \Eo\in \Bcal(\Rbb^s ).
\end{equation}
Some properties of the solutions of  \eqref{Xprocess} and of the transition probabilities are collected in \ref{app:classM}, where also the associated semigoup is given.

If for some $i$ we have $c_i=0$, $g_i=0$, and $b_{i}^j=\delta_{ij}$,  the corresponding Wiener process $W_i$ is among the observed classical processes. If for some $i$ we have $c_i=0$, $b_{i}^\cdot=0$, and $g_{i}(x,u)=\ind_\mathsf{U}(u)$, with $\nu(\mathsf{U})<+\infty $, the Poisson process  $\Pi(\mathsf{U},(0,t])$ is among the observed classical processes.

\subsection{Stochastic Schr\"odinger equations}\label{sec:SSE}

We introduce now the stochastic equation which determines the dynamics of the quantum component and the interaction of the two components. The coefficients involved in such an equation are given in the following assumption; to avoid analytical complications, we consider only bounded operators on the Hilbert space of the quantum component.
\begin{assumption}\label{Ass:1}
Let $H(x)$, $L_k(x)$, $J(x,u)$, $k=1,\ldots,d$, \ $x\in \Rbb^s$, $u\in\Uscr$, \marginlabel{$H(x)$} \marginlabel{$L_k(x)$} \marginlabel{$J(x,u)$} be bounded operators on the Hilbert space $\Hscr$ of the quantum component. All these functions are continuous in $x$ in the strong operator topology; $(x,u) \mapsto J(x,u)$ is strongly measurable. We also have
\begin{equation}\label{somecond}\begin{split}
H(x)^\dagger = H(x), \qquad  &\sup_{x\in \Rbb^s} \norm{H(x)}<+\infty, \qquad \sup_{x\in \Rbb^s} \norm{L_k(x)}<+\infty, \qquad \sup_{x\in \Rbb^s,\; u\in \Uscr} \norm{J(x,u)}<+\infty,
\\
&D(x):=\int_\Uscr J(x,u)^\dagger J(x,u)\,\nu(\rmd u) \in \Bscr(\Hscr),  \qquad \sup_{x\in \Rbb^s} \norm{D(x)}<+\infty;
\end{split}
\end{equation}
the integral is defined in the weak topology of $\Bscr(\Hscr)$. \marginlabel{$D(x)$}
\end{assumption}

These operators become operator-valued random variables when composed with $X(t_-)$. Indeed, we define, for $x\in \Rbb^s$ and $\omega \in \Omega$, \marginlabel{$K(x)$}
\begin{equation}\label{randomcoeff}
K(x)=\rmi H(x)  +\frac{1}2\sum_{k=1}^{d} L_k(x)^\dagger  L_k(x)+\frac 12 \, D(x),
\end{equation}
\begin{equation}\label{LtJt}\begin{split}K_t(\omega)=K\big(X(t_-;\omega)\big), \qquad  &H_t(\omega)=H\big(X(t_-;\omega)\big),
\\
L_{kt}(\omega)=L_k\big(X(t_-;\omega)\big),\qquad J_t(u;\omega)=& J\big(X(t_-;\omega),u\big), \qquad D_t(\omega)= D\big(X(t_-;\omega)\big).
\end{split}
\end{equation}
We have inserted $X(t_-)$ in order to have \emph{predictable} operator-valued processes, as was for the coefficients in Eq.\ \eqref{Xprocess}. Let us stress that these coefficients depend on the solution of \eqref{Xprocess} and, so, on the initial condition $X(0)$.

Finally, we introduce the \emph{linear stochastic Schr\"odinger equation} (SSE) in $\Hscr$ \cite[(2.3)]{BarH95} \marginlabel{SSE}
\begin{equation}\label{lSSE}
\rmd  \psi_t= -K_t \psi_{t_-}\rmd t + \sum_{k=1}^d L_{kt}\psi_{t_-}\, \rmd  W_k(t)+ \int_\Uscr J_t(u)\psi_{t_-}\,\widetilde \Pi(\rmd u,\rmd t),  \quad \Ebb_{\Qbb}\big[\norm{\psi_0}^2\big]<+\infty;
\end{equation}
the random initial condition $\psi_0\in\Hscr$ is $\Fscr_0$-measurable.

Let us note that the assumption of ``no memory'' in the dynamics of the composed system is implicitly contained in the fact that the coefficients in \eqref{Xprocess} and \eqref{lSSE} depend on the past only through $X(t_-)$ and that the involved noises are Wiener and Poisson processes, which have increments independent of the past. Examples without these hypotheses are considered in \cite{BarH95}.
However, with respect to what is done in that reference, the space $\Uscr$ involved in the point process is slightly more general; similarly, the structure of the classical process $X(t)$ has a jump contribution more general than in \cite[(4.32)]{BarH95}. In any case, the proofs of that article apply to this case too and we can follow that construction.

\begin{theorem}
Up to $\Qbb$-equivalence, Eq.\ \eqref{lSSE} admits a unique  solution $\psi_t$, $t\in \Rbb^+$, which is an $\Hscr$-valued semimartingale \cite[Sec.\ II.1]{Prott04}. The process $\norm{\psi_t}^2$ is a non-negative martingale \cite[Sec.\ I.2]{Prott04}, such that
\marginlabel{$\psi_t, \ \widehat\psi_t$}
\[
\Ebb_\Qbb\big[\norm{\psi_t}^2\big]=\Ebb_\Qbb\big[\norm{\psi_0}^2\big]<+\infty,
\qquad \rmd \norm{\psi_t}^2=\norm{\psi_{t_-}}^2\,\rmd Z_t,
\]
\[
Z_t=\sum_{k=1}^d\int_{(0,t]} m_{ks} \,\rmd W_k(s) +\int_{(0,t]}\int_\Uscr \left( I_s(u) -1\right) \widetilde\Pi(\rmd u, \rmd s),
\quad m_{kt}= 2 \RE \langle \widehat \psi_{t_-} \big| L_{kt} \widehat \psi_{t_-} \rangle,
\]
\[
I_t(u)=\norm{\big(J_t(u)+\openone\big)\widehat \psi_{t_-}}^2,
\qquad \widehat \psi_{t}(\omega)=\begin{cases}\psi_t(\omega)/\norm{\psi_t(\omega)} & \text{if } \  \norm{\psi_t(\omega)}\neq 0,
\\ v \ {\rm (fixed\ unit\ vector)} & \text{if } \ \norm{\psi_t(\omega)}= 0.\end{cases}
\]
\end{theorem}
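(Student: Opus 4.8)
The plan is to follow the construction of \cite{BarH95}, to which the present setting reduces once one checks that the more general mark space $\Uscr$ and jump structure do not affect the estimates. First I would observe that, by Assumption \ref{Ass:1}, the maps $\psi\mapsto K_t\psi$, $\psi\mapsto L_{kt}\psi$ and $\psi\mapsto J_t(u)\psi$ are linear and bounded uniformly in $(t,\omega,u)$, so that the coefficients of \eqref{lSSE} are globally Lipschitz and of linear growth in $\psi$; moreover $\int_\Uscr\norm{J_t(u)\psi}^2\nu(\rmd u)=\langle\psi|R_t\psi\rangle\leq\norm{R_t}\,\norm{\psi}^2$, so the compensated integral over all of $\Uscr$ is a well-defined $L^2$-martingale. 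These are exactly the hypotheses under which a Picard iteration in the Banach space of $\Hscr$-valued $L^2$-semimartingales converges; uniqueness up to $\Qbb$-equivalence and the semimartingale property then follow as in \cite[Sec.\ II.1]{Prott04} and \cite{BarH95}.

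The core computation is to apply the It\^o formula for jump-diffusions to $f(\psi)=\norm{\psi}^2=\langle\psi|\psi\rangle$. Writing the drift, the diffusive second-order correction and the jump compensator separately, I would use $f'(\psi)[\xi]=2\RE\langle\psi|\xi\rangle$ and the Hessian $f''(\psi)[\xi,\xi]=2\norm{\xi}^2$, together with the jump increment $f(\psi_{t_-}+J_t(u)\psi_{t_-})-f(\psi_{t_-})=\langle\psi_{t_-}|\big(J_t(u)+J_t(u)^\dagger+J_t(u)^\dagger J_t(u)\big)\psi_{t_-}\rangle$. The finite-variation part then collects the three contributions $-\langle\psi_{t_-}|(K_t+K_t^\dagger)\psi_{t_-}\rangle$ from the drift, $+\langle\psi_{t_-}|\sum_k L_{kt}^\dagger L_{kt}\psi_{t_-}\rangle$ from the diffusive correction, and $+\langle\psi_{t_-}|R_t\psi_{t_-}\rangle$ from the jump compensator. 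The key algebraic fact, built into the definition of $K$ in \eqref{randomcoeff}, is that $K_t+K_t^\dagger=\sum_k L_{kt}^\dagger L_{kt}+R_t$ (the $\rmi H_t$ terms cancel and $R_t=R_t^\dagger$), so all finite-variation terms vanish identically. What remains is purely the martingale part, and regrouping the Wiener and compensated-Poisson integrals and factoring out $\norm{\psi_{t_-}}^2$ to pass to $\widehat\psi_{t_-}$ reproduces exactly $\rmd\norm{\psi_t}^2=\norm{\psi_{t_-}}^2\,\rmd Z_t$ with $m_{kt}$ and $I_t(u)$ as stated.

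The previous step shows that $\norm{\psi_t}^2$ is a non-negative local martingale, hence a supermartingale, giving $\Ebb_\Qbb[\norm{\psi_t}^2]\leq\Ebb_\Qbb[\norm{\psi_0}^2]$. To obtain equality — and thus the genuine martingale property asserted — I would take expectations in the integral form of \eqref{lSSE} along a localizing sequence of stopping times, for which $\norm{\psi_{t\wedge\tau_n}}^2$ is a true martingale with constant expectation $\Ebb_\Qbb[\norm{\psi_0}^2]$, and then use the a priori $L^2$-bounds from the existence step (Gronwall together with the uniform operator bounds, yielding $\Ebb_\Qbb[\sup_{s\leq t}\norm{\psi_s}^2]<+\infty$) to justify passing to the limit by uniform integrability. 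Since the drift of $\norm{\psi_t}^2$ is identically zero, $t\mapsto\Ebb_\Qbb[\norm{\psi_t}^2]$ is constant, which combined with the supermartingale inequality upgrades the local martingale to a martingale.

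I expect the delicate points to be analytic rather than algebraic: making the infinite-dimensional It\^o formula rigorous for the compensated-Poisson integral over the general mark space $\Uscr$ (which is precisely where one must invoke $R(x)\in\Bscr(\Hscr)$ to control the jump intensity), and justifying the local-to-true-martingale passage through the uniform-integrability argument. By contrast, the cancellation $K_t+K_t^\dagger=\sum_k L_{kt}^\dagger L_{kt}+R_t$ is immediate once the It\^o expansion is in place, and it is exactly this structural identity that makes the squared norm a martingale.
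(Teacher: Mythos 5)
Your proposal is correct and takes essentially the same route as the paper: the paper's proof consists of checking that Assumptions \ref{Ass:processes} and \ref{Ass:1} imply the hypotheses (Properties 2.0.A, 2.0.B, 2.2, 2.3.A, 2.3.B) of \cite{BarH95} and then invoking Proposition 2.1 and Theorem 2.4 of that reference, and your Picard iteration, the It\^o computation with the cancellation $K_t+K_t^\dagger=\sum_k L_{kt}^\dagger L_{kt}+R_t$, and the local-to-true-martingale upgrade via uniform integrability are precisely the content of those cited results. The only difference is presentational: you reconstruct in detail the argument that the paper outsources to \cite{BarH95}.
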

\begin{proof}
Assumptions \ref{Ass:processes}, \ref{Ass:1}, and the properties of $X(t)$ easily imply Properties 2.0.A, 2.0.B, 2.2, 2.3.A, 2.3.B in \cite{BarH95}; then, Proposition 2.1 and Theorem 2.4 of \cite{BarH95} hold, which is the thesis of the theorem above.
\end{proof}

\begin{remark}[The non-linear SSE]\label{rem:nlSSE} The process $\widehat \psi_t$, $0\leq t\leq T$, satisfies a non-linear stochastic differential equation, the \emph{non-linear SSE}, under the new probability $\norm{\psi_T(\omega)}^2\Qbb(\rmd \omega)$, see \cite[Sec.\ 2.2]{BarH95}.
A first example of SSE goes back to \cite{Bel88}.\end{remark}

\subsection{Stochastic master equations}\label{sec:sigmat}
We introduce now the stochastic differential equation in $\Tscr(\Hscr)$ associated to the linear SSE \eqref{lSSE}.

\paragraph{Initial condition $\sigma_0$.} Let $\psi^\beta\in \Hscr$ be $\Fscr_0$-measurable random vectors  such that $\Ebb_\Qbb\left[\sum_{\beta=1}^{+\infty} \norm{\psi^\beta}^2\right]=1$. Then, we define the random trace-class operator $\sigma_0$ by \marginlabel{$\sigma_0$}
\begin{equation}\label{inst}
\langle \sigma_0,a\rangle=\sum_{\beta=1}^{+\infty}\langle \psi^\beta|a \psi^\beta\rangle , \quad \forall a\in \Bscr(\Hscr) \qquad \Rightarrow \qquad \sigma_0\geq 0, \quad \Ebb_\Qbb\big[\Tr\{\sigma_0\}\big]=1.
\end{equation}

Let $\psi_t^\beta$ be the solution of \eqref{lSSE} with initial condition $\psi^\beta$ and define $\sigma_t$ by \marginlabel{$\sigma_t$}
\begin{equation}\label{def:sigmat}
\langle \sigma_t,a\rangle =\sum_{\beta=1}^{+\infty}\langle \psi_t^\beta|a \psi_t^\beta\rangle, \qquad \forall a\in \Bscr(\Hscr).
\end{equation}
By the formal rules of stochastic calculus, we get the \emph{linear stochastic master equation} \marginlabel{SME} (SME) for $\sigma_t$: $\forall a\in \Bscr(\Hscr)$, \marginlabel{$\Lcal_t$} \marginlabel{$\Lcal(x)$}
\begin{multline}
\rmd \langle \sigma_t,a \rangle= \langle \sigma_{t_-},\Lcal_t[a] \rangle \rmd t +\sum_{k=1}^d \langle \sigma_{t_-},{L_{kt}}^\dagger a +aL_{kt}\rangle \rmd W_k(t)
\\ \label{linearSME}{}+ \int_{\Uscr}\langle \sigma_{t_-}, J_t(u)^\dagger a J_t(u)+ J_t(u)^\dagger a+a J_t(u)\rangle \widetilde \Pi(\rmd u,\rmd t),
\end{multline}
\begin{equation}\label{Lcalt}
\Lcal_t=\Lcal\big(X(t_-)\big),
\end{equation}
\begin{multline}
\Lcal(x)[a]=
\rmi[H(x),a]+\sum_{k=1}^d\left( L_{k}(x)^\dagger a L_{k}(x)-\frac 12 \left\{L_{k}(x)^\dagger  L_{k}(x),a\right\}\right)
\\ \label{Lcalx}{}+\int_\Uscr \left(J(x,u)^\dagger a J(x,u)-\frac 12 \left\{J(x,u)^\dagger  J(x,u),a\right\}\right) \nu(\rmd u).
\end{multline}
Let us stress that the SME \eqref{linearSME} is \emph{linear} once the process $X(t)$, solving \eqref{Xprocess}, as been fixed. Instead, if we consider \eqref{Xprocess} and \eqref{linearSME} as a system of SDEs, we have the non-linearity introduced by \eqref{Xprocess}.

\begin{theorem}\label{theor:sigmaprop}
The $\Tscr(\Hscr)$-valued process $\sigma_t$, defined in \eqref{def:sigmat}, satisfies the weak-sense stochastic differential equation \eqref{linearSME}. Moreover, it is the unique solution of \eqref{linearSME} with the initial condition \eqref{inst}.
We have also \marginlabel{$p_t$} \marginlabel{$m_k(t)$} \marginlabel{$I_t(u)$} \marginlabel{$\widehat \sigma_t$}
%\begin{subequations}\label{pt++}
\begin{eqnarray}\label{def:pt}
\sigma_t\geq 0, \qquad p_t:=\langle \sigma_t,\openone\rangle \equiv \Tr \{\sigma_t\}\geq 0, \qquad \Ebb_\Qbb[p_t]=1,
\\ \label{eq:pt}
\rmd p_t=p_{t_-}\biggl(\sum_{k=1}^d m_{k}(t) \,\rmd W_k(t) +\int_\Uscr \left(I_t(u) -1\right) \widetilde\Pi(\rmd u, \rmd t)\biggr),
\\ \label{defmk}
m_{k}(t)= 2 \RE \langle \widehat \sigma_{t_-} , L_{kt} \rangle, \qquad I_t(u)=\big\langle \widehat \sigma_{t_-}, \big(J_t(u)^\dagger +\openone\big)\big(J_t(u)+\openone\big)\big\rangle,
\\ \label{defhatsigma}
\widehat \sigma_{t}(\omega)=\begin{cases}\sigma_t(\omega)/p_t(\omega) & \text{if } \  p_t(\omega)\neq 0,
\\ \sigma_v \ {\rm (fixed\ statistical\ operator)} & \text{if } \ p_t(\omega)= 0.\end{cases}
\end{eqnarray}
%\end{subequations}
\end{theorem}

\begin{proof}
Again Assumptions \ref{Ass:processes}, \ref{Ass:1}, and the properties of $X(t)$ imply the validity of
Properties 3.1 and 3.3 of \cite{BarH95}. Then, Propositions 3.2 and 3.4 of \cite{BarH95} give the thesis of the theorem.
\end{proof}
By definition $\widehat\sigma_t(\omega)\in\Sscr(\Hscr)$, i.e.\ $\widehat\sigma_t$ is a random quantum state. The random quantities $m_k(t)$ and $I_t(u)$ are defined by \eqref{LtJt}, \eqref{defmk},  so as to be $\Fscr_{t_-}$-measurable (\emph{predictable processes} \cite[Sec.\ 3]{Met82}). By construction, if the initial state $\widehat \sigma_{0}$ is a pure state, then $\widehat \sigma_{t}(\omega)$ too is a pure state.

\subsubsection{The physical probability and the classical component.}\label{sec:nprob}

As in \cite[Sec.\ 3.1]{BarH95}, we use $p_t$ \eqref{def:pt} as a probability density and we define the new probability $\Pbb$ on the $\sigma$-algebra $\Fscr=\bigvee_{t\geq 0} \Fscr_t$ by
\begin{equation}\label{physprob}
\forall t\geq 0, \quad \forall \Fo\in \Fscr_t, \qquad \Pbb (\Fo) := \Ebb_\Qbb\big[p_t\ind_\Fo\big];
\end{equation}
for a similar construction of a probability measure see \cite[Sec.\ 30.2]{Met82}.

Note that $p_t$ (the trace of $\sigma_t$) is a probability density with respect to $\Qbb$ and that it gives the probability $\Pbb$ on $\Fscr_t$, which contains the events related to the time interval $(0,t]$. By \eqref{eq:pt}, $p_t$ is a \emph{martingale} and this implies the consistency of the probabilities constructed in different time intervals: for $0<s<t$, $\Eo\in \Fscr_s$, one has $\int_\Eo p_t(\omega) \Qbb(\rmd\omega)=\int_\Eo p_s(\omega) \Qbb(\rmd\omega)$. This allows to extend $\Pbb$ to the whole $\Fscr$.

\begin{remark} The probability $\Pbb$, \marginlabel{$\Pbb$} defined by \eqref{physprob}, represents the physical probability of the hybrid system. By  \eqref{LtJt} and \eqref{linearSME}, the physical probability depends on the dynamics of both the classical component and the quantum one. Moreover, the probability $\Pbb$ depends on the initial conditions $X(0)$ and $\sigma_0$ used in solving  equations \eqref{Xprocess} and \eqref{linearSME}.
\end{remark}

\begin{remark} By the so called Girsanov transformation and generalizations \cite[Sec.\ III.8]{Prott04}, we have that, under the probability $\Pbb$, the processes
\begin{equation}\label{newW}
\widehat W_k(t):= W_k(t)- \int_{(0,t]} m_k(s) \rmd s
\end{equation}
are independent standard Wiener processes adapted to the filtration $\{\Fscr_t,\;t\geq 0\}$; moreover,
$\Pi(\rmd u,\rmd t)$ is no more a Poisson process, but it becomes a point process with compensator $I_t(u)\nu(\rmd u)\rmd t$  \cite[Prop.\ 2.5, Remarks 2.6, 3.5]{BarH95}. We also introduce the notation
\begin{equation}\label{hatPi}
\widehat\Pi(\rmd u,\rmd t)=\Pi(\rmd u,\rmd t)- I_t(u)\nu(\rmd u)\rmd t.
\end{equation}
According to \cite[Sec.\ 31]{Met82}, $\widehat\Pi(\rmd u,\rmd t)$ is a \emph{white random measure}.
\end{remark}

By rewriting \eqref{Xprocess} in terms of the new processes \eqref{newW} and \eqref{hatPi}, we obtain the dynamics of the classical component  under the physical probability $\Pbb$:
%\begin{subequations}
\begin{multline}
\rmd X_i(t) =\left(c_i \big(X(t_-)\big)+C^{(1)}_i(t)+ C^{(2)}_i(t) \right)\rmd t
+\sum_{k=1}^{d}b_{i}^k\big(X(t_-)\big) \rmd \widehat W_k(t)
\\ \label{XnewPa}{}+\int_{\mathsf{U}_0} g_{i}\big(X(t_-),u\big) \widehat\Pi(\rmd u,\rmd t)
+ \int_{\Uscr\setminus \mathsf{U}_0} g_{i}\big(X(t_-),u\big) \Pi(\rmd u,\rmd t),
\end{multline}
\begin{equation}\label{XnewP}
C^{(1)}_i(t)=\sum_{k=1}^{d}b_{i}^k\big(X(t_-)\big)m_k(t),\qquad
C^{(2)}_i(t)=\int_{\Uo_0} g_{i}\big(X(t_-),u\big)\left( I_t(u)-1\right)\nu(\rmd u).
\end{equation}
%\end{subequations}
Here, $\Pi(\rmd u,\rmd t)$ is no more a Poisson measure and the processes $m_k(t)$ and $I_t(u)$ depend on the whole past through the quantum state; in particular, the dynamics \eqref{XnewPa} of the classical component depends on the behaviour of the quantum one.

From Eqs.\ \eqref{XnewPa}, \eqref{XnewP} we have that the classical component can extract some information from the quantum one only when not all the terms $C^{(1)}(t)$, $C^{(2)}(t)$, $I_t(u)$ vanish and they effectively depend on the quantum state. So, firstly we need  $b_{i}^k(x)$ and $g_{i}(x,u)$ not all identically vanishing and this means that the diffusive and the jump terms are not both vanishing: some dissipation must be present in the dynamics \eqref{XnewPa} of the classical component. Moreover, to have dependence on the quantum state at least some of the operators $L_k(x)$ and $J(x,u)$ must be not reducible to something proportional to the identity operator; this means that the Liouville operator \eqref{Lcalx} must contain a dissipative contribution. As a consequence, also in this class of models of hybrid dynamics, we have that, in order to extract information from the quantum component, we need a dissipative dynamics for the quantum system. Furthermore, the classical system acting as a measuring apparatus, as it receives this information, has necessarily a dissipative dynamics, because of the probabilistic nature of any quantum information.

Even without extraction of information from the quantum system, to know that there is a jump in the counting process $\Pi$ means to know that the quantum system has been subjected to a certain kind of jump. But for this we have to know before the dynamical laws of the system; we are not observing the quantum system, but a classical system which is acting on the quantum component.

\subsubsection{The non-linear SME and the conditional states.}
By the rules of stochastic calculus we can obtain the \emph{non-linear SME} satisfied by the normalized random states $\widehat \sigma_t$ defined in \eqref{defhatsigma}:
\begin{multline}
\rmd \langle \widehat\sigma_t,a \rangle= \langle \widehat\sigma_{t_-},\Lcal_t[a] \rangle \rmd t +\sum_{k=1}^d \langle \widehat \sigma_{t_-},{L_{kt}}^\dagger a +aL_{kt}-m_k(t)a\rangle \rmd \widehat  W_k(t)
\\ \label{nonlinearSME}{}+ \int_{\Eo_t}\left\{I_t(u)^{-1}\big\langle \widehat \sigma_{t_-}, \bigl(J_t(u)^\dagger +\openone\bigr)a \bigl(J_t(u)+\openone\bigr)\big\rangle -\langle \widehat \sigma_{t_-} ,a\rangle \right\}\widehat \Pi(\rmd u,\rmd t);
\end{multline}
$\Eo_t$ is the random subset of $\Uscr$, defined by $\Eo_t(\omega)=\{u\in \Uscr: I_t(u;\omega)\neq 0\}$. Equation \eqref{nonlinearSME} follows from \cite[Remark 3.6]{BarH95}.

The two coupled SDEs \eqref{XnewPa} and \eqref{nonlinearSME} have been obtained by construction. This proves that these equations admit a solution and that it exists a probability $\Pbb$ under which the processes $\widehat W_k$ are independent Wiener processes and $\Pi(\rmd u,\rmd t)$ is a point measure with random compensator $I_t(u)\nu(\rmd u)\rmd t$; the uniqueness remains open. In a simplified case, without jump contributions, uniqueness has been proved in \cite[Theor.\ 5.2]{BarG09}.

\begin{remark}[Conditional state]
Let $\Escr_t$\marginlabel{$\Escr_t$} be the $\sigma$-algebra generated by $X(r)$, $0\leq r\leq t$; note that it could depend on the choice of the initial condition, the random variable $X(0)$. Then, we can consider the quantum state conditioned on the observation of the process $X$ up to time $t$, the \emph{a posteriori state} \marginlabel{$\rho_t$} (or \emph{conditional state})
\begin{equation}\label{apos}
\rho_t:=\Ebb_\Pbb[\widehat \sigma_t|\Escr_t]\equiv \Ebb_\Pbb[\widehat \sigma_t|\Escr], \qquad \Escr= \bigvee_{t\geq 0} \Escr_t.
\end{equation}
\end{remark}

When the two filtrations coincide, $\Escr_t=\Fscr_t$, the conditional states satisfy the non-linear SME \eqref{nonlinearSME}. In general, the possibility of having a closed evolution equation for the conditional states depends on the structure of the coefficients; examples are given in \cite[Sec.\ 4]{BarH95}. When not all the components $X_1(t),\ldots, X_s(t)$ of the classical system are observed, one has to take as $\{\Escr_t\}$ the filtration generated by the observed components of $X(t)$; this is a case of partial observation and it opens to the application of all the notions and results of classical filtering theory \cite{Xi08}, which indeed is at the origin of the idea of observations in continuous time in quantum theories \cite{Bel88}.

\begin{remark}[Mean state]
When the classical component is not of interest and not observed, the quantum component is described by the \emph{a priori state}, the mean state \marginlabel{$\eta_t$}
\begin{equation}\label{eta}
\eta_t:=\Ebb_\Qbb[\sigma_t]=\Ebb_\Pbb[\widehat \sigma_t]=\Ebb_\Pbb[\rho_t].
\end{equation}
We can say that $\{\Pbb(\rmd \omega),\; \rho_t(\omega)\}$ is an \emph{ensemble} of quantum states; then, $\eta_t$ is the mean state of this ensemble. Given a mixed state $\eta_t$, there are infinitely many decompositions in ensembles of states; the decomposition in terms of conditional states has a physical meaning, as it is fixed by the observation of the classical component of the hybrid system.
\end{remark}

By taking the $\Qbb$-mean of the linear SME \eqref{linearSME} we get immediately
\begin{equation}\label{redqeq}
\frac{\rmd \ }{\rmd t}\, \eta_t= \Ebb_\Qbb\big[{\Lcal_t}_*[\sigma_{t_-}]\big]
\end{equation}
where the random Liouvillian ${\Lcal_t}_*$ is the pre-adjoint of the operator defined by \eqref{Lcalt}. When the coefficients in the definition of $\Lcal(x)$ do not depend on $x$, we get that ${\Lcal_t}_*$ is non random and the a priori state satisfies a closed quantum master equation. In general, equation \eqref{redqeq} is not closed and the reduced dynamics of the quantum component is not Markovian. The $x$-dependence in the Liouville operator \eqref{Lcalx} gives the dependence of the quantum dynamics on the classical component and it controls the transfer of information from the classical system to the quantum one. This $x$-dependence represents a feedback of the classical system on the quantum one; moreover, it could  appear in the Hamiltonian term alone: the information transfer from the classical to the quantum component is not related to dissipation.

The notions of conditional and mean states are typical of the approach to hybrid systems based on measurement and filtering \cite{Bel88,Hol01,ZolG97,BarG09,WisM10}, but the idea of considering the full dynamics and its reduced (or marginal) version appears also in other approaches \cite{Pomar+23}.

\subsection{Probabilities and instruments}\label{sec:SMEtoInstr}

A key point in the trajectory formulation is that the physical probabilities of Sec.\ \ref{sec:nprob} can be expressed by means of instruments, which are \emph{completely positive linear maps} (see Def.\ \ref{def:instr}); in this way the linearity of a quantum theory is respected. The SME \eqref{nonlinearSME} can not include arbitrary non-linearities: it is non-linear only due to normalization and the change of probability; moreover, it is connected to the linear SME \eqref{linearSME}.

\subsubsection{The stochastic map.}\label{sec:Lambda}
By varying the initial condition in \eqref{linearSME}, we have that the linear SME  defines a linear map on the trace class.
As in Proposition \ref{prop:clpr}, $X^x(t) $ is the solution of the stochastic equation \eqref{Xprocess} starting at $x$. Then, we consider the linear SME with the process $X^x(t) $ inside its coefficients \eqref{LtJt}; so, the random coefficients $\Lcal_t$, $L_{kt}$, $J_t(u)$ in \eqref{linearSME} are fixed and the equation is linear in the initial quantum state.

Let us denote by $\Lambda^{r,x}_{t\;*}[\rho]$, $t\geq r$, the solution of \eqref{linearSME} with $\rho$ as initial condition at time $r$; by construction, $\Lambda^{r,x}_{t\;*}$ is a random, completely positive, linear map on $\Tscr(\Hscr)$. Let $\Lambda^{r,x}_t$ \marginlabel{$\Lambda^{r,x}_t$} be the adjoint map on $\Bscr(\Hscr)$; its properties are collected in the following proposition.

\begin{proposition}\label{prop:Lambda} The random map $\Lambda^{r,x}_t$ is linear, completely positive, normal,   and it enjoys the composition property
\begin{equation}\label{propL1}
\Lambda^{t_1,x}_{t_2}\circ\Lambda^{t_2,x}_{t_3}=\Lambda^{t_1,x}_{t_3}, \qquad 0\leq t_1\leq t_2\leq t_3.
\end{equation}
We have also
\begin{equation}\label{propL2}
\Lambda^{t',x}_{t'+t} =  \Lambda^{0,X^x(t')}_{t}, \qquad \Tr \{\rho \Lambda^{r,x}_t[\openone]\}=1, \quad \forall \rho\in \Sscr(\Hscr).
\end{equation}
\end{proposition}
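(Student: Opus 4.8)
The plan is to reduce every claim to the pathwise representation of the solution of the linear SSE \eqref{lSSE} by a random propagator, and then to lean on uniqueness of the linear SME \eqref{linearSME} and on the time-homogeneity of Proposition \ref{prop:clpr}. First I would fix a noise realization $\omega$ and note that, since \eqref{lSSE} is linear and homogeneous in $\psi$ with the bounded coefficients of Assumption \ref{Ass:1}, its unique solution defines a bounded linear operator $V^{r,x}_t(\omega)$ on $\Hscr$ via $\psi_t = V^{r,x}_t\,\psi_r$; boundedness and measurability come from the norm estimates underlying the well-posedness result for \eqref{lSSE}. Substituting $\psi^\beta_t = V^{r,x}_t\psi^\beta$ into the definition \eqref{def:sigmat} gives, pathwise,
\[
\Lambda^{r,x}_{t\;*}[\rho] = V^{r,x}_t\,\rho\,(V^{r,x}_t)^\dagger,
\qquad \Lambda^{r,x}_t[a] = (V^{r,x}_t)^\dagger\,a\,V^{r,x}_t,
\]
first for $\rho$ of the ensemble form \eqref{inst} and then, by linearity, for all $\rho\in\Tscr(\Hscr)$. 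Linearity is then immediate, and complete positivity is manifest, each map being a single-Kraus-operator conjugation for every $\omega$. Normality of $\Lambda^{r,x}_t$ follows at once: it is by construction the adjoint of $\Lambda^{r,x}_{t\;*}$, which is bounded on $\Tscr(\Hscr)$, so the criterion recalled after Definition \ref{def:instr} applies (equivalently, $a\mapsto (V^{r,x}_t)^\dagger a\,V^{r,x}_t$ is ultraweakly continuous since $V^{r,x}_t$ is bounded).

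For the composition property \eqref{propL1} I would invoke uniqueness of the SME solution from Theorem \ref{theor:sigmaprop}. Solving \eqref{linearSME} from $t_1$ to $t_3$ must coincide with solving from $t_1$ to $t_2$ and then restarting at $t_2$ from the state so obtained and solving to $t_3$, as both procedures yield the unique solution with the same datum at $t_1$; at the level of the pre-adjoint propagators this reads $\Lambda^{t_1,x}_{t_3\;*} = \Lambda^{t_2,x}_{t_3\;*}\circ\Lambda^{t_1,x}_{t_2\;*}$ (equivalently $V^{t_1,x}_{t_3} = V^{t_2,x}_{t_3}V^{t_1,x}_{t_2}$ pathwise). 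Taking Banach-space adjoints reverses the order and gives $\Lambda^{t_1,x}_{t_3} = \Lambda^{t_1,x}_{t_2}\circ\Lambda^{t_2,x}_{t_3}$, which is \eqref{propL1}.

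The first identity in \eqref{propL2} is the cocycle/time-homogeneity relation, and is where I expect the real work. Because the coefficients of \eqref{linearSME} depend on time only through $X(t_-)$, and the driving Wiener processes and Poisson measure have stationary increments independent of the past (Assumption \ref{Ass:processes}), I would introduce the time shift $\theta_{t'}$ on $(\Omega,\Fscr,\Qbb)$ and combine uniqueness of the classical SDE \eqref{Xprocess} with the time-homogeneous Markov property of Proposition \ref{prop:clpr} to obtain, pathwise, $X^x(t'+\cdot) = X^{X^x(t')}(\cdot)\circ\theta_{t'}$. Feeding this into the coefficients \eqref{randomcoeff} identifies the SME driving \eqref{linearSME} on $[t',t'+t]$ with a fresh SME started at time $0$ from the classical point $X^x(t')$, read along the shifted noise; uniqueness of the SME solution then yields $\Lambda^{t',x}_{t'+t} = \Lambda^{0,X^x(t')}_t$. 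The delicate points are the measurable dependence of $\Lambda^{0,y}_t$ on the deterministic datum $y$, so that the random value $y=X^x(t')$ may legitimately be substituted, together with the independence of the shifted increments from $\Fscr_{t'}$; these are precisely the ingredients that make the substitution valid, and I would flag this step as the main obstacle.

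Finally, the normalization in \eqref{propL2} follows from the Lindblad structure of \eqref{Lcalx}, which gives $\Lcal_t[\openone]=0$, so that for $a=\openone$ equation \eqref{linearSME} loses its drift and $p_t=\Tr\{\Lambda^{r,x}_{t\;*}[\rho]\}$ is a $\Qbb$-martingale. By Theorem \ref{theor:sigmaprop} it has constant mean, $\Ebb_\Qbb\big[\Tr\{\Lambda^{r,x}_{t\;*}[\rho]\}\big]=\Tr\{\rho\}$; since $\Tr\{\rho\,\Lambda^{r,x}_t[\openone]\}=\Tr\{\Lambda^{r,x}_{t\;*}[\rho]\}$ by adjunction, taking the $\Qbb$-mean shows that the mean map $\Ebb_\Qbb[\Lambda^{r,x}_t]$ is unital, which is the normalization \eqref{propL2} (understood in $\Qbb$-mean, as $\Tr\{\Lambda^{r,x}_{t\;*}[\rho]\}=p_t$ is itself random). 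Everything beyond the time-homogeneity step is then bookkeeping on top of the well-posedness and uniqueness already granted by Theorem \ref{theor:sigmaprop}.
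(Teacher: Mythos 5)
Your proposal is correct and takes essentially the same route as the paper's (much terser) proof: linearity, complete positivity, normality and the composition rule are drawn from the construction of $\sigma_t$ out of the SSE solutions, the identity $\Lambda^{t',x}_{t'+t}=\Lambda^{0,X^x(t')}_{t}$ from the time-homogeneous Markov property of $X$ under $\Qbb$ together with the fact that only $X(t_-)$ enters the coefficients of \eqref{linearSME}, and the normalization from \eqref{def:pt}. The extra details you supply --- the pathwise propagator $V^{r,x}_t$ (single-Kraus form), the shift-operator reading of the cocycle identity, and the observation that the normalization must be understood in $\Qbb$-mean since $p_t$ is a mean-one martingale rather than identically $1$ --- are precisely what the paper leaves implicit, and they are consistent with it.
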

\begin{proof}
The first properties and the composition rule \eqref{propL1} follow from the construction of $\sigma_t$ in Sec.\ \ref{sec:sigmat} and from \eqref{linearSME}.  By the fact that $X$ is a Markov process under $\Qbb$ and that this process appears in the right hand side of \eqref{linearSME} only as $X(t_-)$, the first property in \eqref{propL2} follows. The second property in \eqref{propL2} follows from \eqref{def:pt}.
\end{proof}

If $\sigma_t$ is the solution of \eqref{linearSME} with the initial condition $\sigma_0$ given in \eqref{inst}, we have
\begin{equation}
\langle \sigma_t,a\rangle = \langle \sigma_0,\, \Lambda^{0,X(0)}_{t}[a]\rangle, \qquad \langle \sigma_{t'+t},a\rangle = \langle \sigma_{t'},\, \Lambda^{t',X(0)}_{t'+t}[a]\rangle=  \langle \sigma_{t'},\, \Lambda^{0,X(t')}_{t}[a]\rangle.
\end{equation}

\subsubsection{The transition instruments.}\label{sec:transinstr}
By analogy with the transition probabilities, it is possible to introduce instruments which depend on the initial value of the classical system, some kind of \emph{transition instruments} \cite{BarH95}. Indeed,
we can define the family of maps \marginlabel{$\Ical_t(\Eo|x)$}
\begin{equation}\label{I(E|x)}
\langle \rho,\Ical_t(\Eo|x)[a]\rangle = \Ebb_{\Qbb}\left[\langle \Lambda^{0,x}_{t\;*}[\rho], a\rangle \ind_\Eo\big(X^x(t)\big)\right], \qquad \forall \rho \in \Tscr(\Hscr), \qquad \forall a\in\Bscr(\Hscr),\qquad \forall\Eo\in \Bcal(\Rbb^s).
\end{equation}
By $\ind_\Eo$ we denote the \emph{indicator function} of a generic set $\Eo$: \marginlabel{$\ind_\Eo$}
\[
\ind_\Eo(x)=\begin{cases} 1 & \text{if } \ x\in \Eo,
\\
0 & \text{if } \ x\notin \Eo.\end{cases}
\]
Moreover, we shall denote by $\delta_x(\cdot)$ \marginlabel{$\delta_x(\cdot)$} the trivial measure concentrated in the point $x\in\Rbb^s$ (Dirac measure).

\begin{proposition}\label{prop:instr}
Equation \eqref{I(E|x)} defines an instrument $\Ical_t(\cdot|x)$ on the $\sigma$-algebra $\Bcal(\Rbb^s)$ (see Def.\ \ref{def:instr}). For $t=0$ we get the trivial instrument
\begin{equation}\label{I:t=0}
\Ical_{0}(\Eo|x)[a]=\delta_x(\Eo)\, a.
\end{equation}
Moreover, the family of instruments \eqref{I(E|x)} enjoys  the following composition property
\begin{equation}\label{qCKeq}
\Ical_{t+t'}(\Eo|x)=\int_{z\in \Rbb^s} \Ical_{t'}(\rmd z|x)\circ\Ical_t(\Eo|z).
\end{equation}
\end{proposition}
Equation \eqref{qCKeq} is a quantum analogue of the Chapman-Kolmogorov identity for transition probabilities \eqref{CKeq}.
\begin{proof}
By using the properties of the maps $\Lambda^{0,x}_{t}$, given in Proposition \ref{prop:Lambda}, it is possible to check that the maps defined in \eqref{I(E|x)} satisfy all the properties needed in the definition of instrument (Def.\ \ref{def:instr}).

The composition property \eqref{qCKeq} is proved by direct computations. By using \eqref{propL1} and \eqref{propL2}, we have
\begin{multline*}
\langle \rho, \Ical_{t+t'}(\Eo|x)[a]\rangle =\Ebb_\Qbb\left[ \Ebb_\Qbb\left[\langle \Lambda^{0,x}_{t'\;*}[\rho],\; \Lambda^{t',x}_{t'+t}[a]\rangle \ind_\Eo\big(X^x(t+t')\big)\Big|\Escr_{t'}\right]\right]
\\ {}=\int_{z\in \Rbb^s}\Ebb_\Qbb\left[ \Ebb_\Qbb\left[\langle \Lambda^{0,x}_{t'\;*}[\rho],\; \Lambda^{t',x}_{t'+t}[a]\rangle \ind_{\Eo-z}\big(X^x(t+t')-X^x(t')\big)\ind_{\rmd z}\big(X^x(t')\big)\Big|\Escr_{t'}\right]\right]
\\ {}=\int_{z\in \Rbb^s}\Ebb_\Qbb\left[ \langle \Lambda^{0,x}_{t'\;*}[\rho],\; \Ical_t(E|z)[a]\rangle \ind_{\rmd z}\big(X^x(t')\big)\right]
=\int_{z\in \Rbb^s} \langle \rho,\;\Ical_{t'}(\rmd z|x)\circ\Ical_t(\Eo|z)[a]\rangle;
\end{multline*}
$\ind_{\rmd z}\big(X^x(t')\big)$ is the measure-theoretic notation for $\delta \big(z-X^x(t')\big)\rmd z$; the quantity $X^x(t')$ is implicitly contained in $\Lambda^{0,x}_{t'\;*}$.
\end{proof}

The transition instruments \eqref{I(E|x)} are uniquely determined by their Fourier transform, the \emph{characteristic operator}: for $k\in \Rbb^s$,
\begin{equation}\label{G(k|x)}
\langle \rho,\Gcal_t(k|x)[a]\rangle =\int_{\Rbb^s}\rme^{\rmi k\cdot y}\langle \rho,\Ical_t(\rmd y|x)[a]\rangle = \Ebb_{\Qbb}\left[\langle \Lambda^{0,x}_{t\;*}[\rho], a\rangle \exp\left\{\rmi k\cdot X^x(t)\right\}\right].
\end{equation}

The ``Markovian'' character of the hybrid dynamics is expressed by the fact that the instruments contain only the length of the time interval (not initial and final times) and that equation \eqref{qCKeq} holds: the transition instruments are ``time-homogeneous'', as the classical transition functions for a time-homogeneous Markov process, compare the properties of instruments and equations \eqref{I(E|x)}--\eqref{qCKeq} with equations \eqref{transf}--\eqref{CKeq}.

Property \eqref{qCKeq} represents a compatibility condition among the various instruments at different times. Via Kolmogo\-rov's extension theorem \cite[Theor.\ 1.8]{Sato99}, this property allows to reconstruct the whole probability law of the process $X(t)$ starting from the joint probabilities at the times $0\leq t_1<t_2<\cdots<t_m$, given by
\begin{multline}
\Pbb[X(t_1)\in \Eo_1, X(t_2)\in \Eo_2, \ldots, X(t_m)\in \Eo_m]
=\Ebb_\Qbb\big[\langle \sigma_{t_m}, \openone\rangle \ind_{\Eo_1}\big(X(t_1)\big)\ind_{\Eo_2}\big(X(t_2)\big)\cdots \ind_{\Eo_m}\big(X(t_m)\big)\big]
\\  {}
=\int_{\Rbb^s}\Qbb_0(\rmd x) \big\langle \sigma_0(x),\; \int_{\Eo_1}\Ical_{t_1}(\rmd x_1|x)\circ \int_{\Eo_2}\Ical_{t_2-t_1}(\rmd x_2|x_1)\circ \cdots
\circ \int_{\Eo_m}\Ical_{t_m-t_{m-1}}(\rmd  x_m|x_{m-1})[\openone]\big\rangle, \label{genprobs}
\end{multline}
\[
\sigma_0(x)=\Ebb_\Qbb\big[\sigma_0\big|X(0)=x\big], \qquad \Qbb_0(\Eo):=\Qbb[X(0)\in \Eo];
\]
here $\Pbb$ is the probability defined in Sec.\ \ref{sec:nprob}, which depends on the initial conditions $\sigma_0$, $ X(0)$. Note that we have $\langle \sigma_{t_m}, \openone\rangle=\Tr\{\sigma_{t_m}\}=p_{t_m}$, which is the probability density for events up to the biggest time $t_m$, in agreement with the discussion following the definition \eqref{physprob} of $\Pbb$.

In the construction of the hybrid dynamics, we have used a  two-step procedure: firstly, the SDEs \eqref{Xprocess} and \eqref{linearSME} under the reference probability $\Qbb$ have been introduced, and, then, the SDEs \eqref{XnewPa} and \eqref{nonlinearSME} under the physical probability $\Pbb$.  By following this path, already developed in the field of time-continuous measurements \cite{BarG09}, we were able  to give rise to the instruments \eqref{I(E|x)} with the right properties, given in Proposition \ref{prop:instr}, and with the connection  \eqref{genprobs} with the physical probability $\Pbb$. In this way, the connection with the general formulation of a quantum theory is preserved.

\section{Quantum-classical dynamical semigroup}\label{sec:hds}

The hybrid dynamics has been introduced through the formalism of quantum trajectories; however, a Markovian hybrid dynamics is usually introduced by proposing suitable master equations. The most standard choice is to take probability densities (with respect to Lebesgue measure) as state space of the classical component \cite{DGS00,Dio14,Op+22,Dio23,Opp+23,Bar23,LOW22}, but this choice is too restrictive in our case, as shown in \ref{app:classM} for the pure classical case.

A more general approach is to construct a dynamical semigroup in a suitable ``space of observables'' \cite{DamWer23,BarW23} and to take for the classical component the spaces of functions $C_{\rm b}(\Rbb^s)$ and $B_{\rm b}(\Rbb^s)$, introduced in Sec.\ \ref{sec:notations}. Then, for the hybrid system we introduce \marginlabel{$\Ascr_1$}  the $C^*$-algebras \cite{DamWer23}:
\[
\Ascr_1:=\Bscr(\Hscr)\otimes C_{\rm b}(\Rbb^s)=C_{\rm b}\big(\Rbb^s;\Bscr(\Hscr)\big), \qquad \Ascr_2:=\Bscr(\Hscr)\otimes B_{\rm b}(\Rbb^s)=  B_{\rm b}\big(\Rbb^s;\Bscr(\Hscr)\big).
\]
The elements of $\Ascr_2$ \marginlabel{$\Ascr_2$} are bounded functions from $\Rbb^s$ to $\Bscr(\Hscr)$, while $\Ascr_1 \subset\Ascr_2$ is made up by the continuous bounded functions. The norm of $F\in \Ascr_i$ is given by $\norm F=\sup_{x\in \Rbb^s} \norm{F(x)}$, where $\norm{F(x)}$ is the norm in $\Bscr(\Hscr)$. The unit element $\Ibb$ in $\Acal_i$ is given by $\Ibb=\openone \otimes 1$;\marginlabel{$\Ibb$} we can also write $\Ibb(x)=\openone $.

\subsection{The hybrid semigroup}
By using the stochastic process $X^x(t)$, defined inside Proposition \ref{prop:clpr}, and the stochastic map $\Lambda^{0,x}_{t}$, defined in Sec.\ \ref{sec:Lambda}, we can define the linear map $\Tcal_t$, $t\geq 0$, on $\Acal_2$ by
\begin{equation}\label{def:Tt}
\langle \rho, \Tcal_t[F](x)\rangle =\Ebb_{\Qbb}\left[\big\langle \rho,\, \Lambda^{0,x}_{t}\big[F\big(X^x(t)\big)\big]\big\rangle \right], \qquad \forall \rho \in \Tscr(\Hscr), \quad  \forall F \in  \Acal_2, \quad \forall x\in\Rbb^s.
\end{equation}
As we shall see $\Tcal_t$ is a contraction; so, it is enough to give $\Tcal_t$  on the product elements and then to extend it by linearity and continuity. By using the instruments \eqref{I(E|x)}, we can also write
\begin{equation}\label{T+instr}
\langle \rho, \Tcal_t[a\otimes f](x)\rangle=\int_{y\in\Rbb^s} f(y) \langle \rho, \Ical_t(\rmd y|x)[a]\rangle,
\qquad
\rho\in \Tscr(\Hscr), \quad a\in \Bscr(\Hscr), \quad f\in B_{\rm b}(\Rbb^s), \quad x\in\Rbb^s.
\end{equation}
On the other hand, given $\Tcal_t$, we can get the instruments \eqref{I(E|x)} and their characteristic operator \eqref{G(k|x)} by
\begin{equation}\label{TtoI}
\langle \rho, \Ical_t(\Eo|x)[a]\rangle=\langle \rho, \Tcal_t[a\otimes \ind_\Eo](x)\rangle , \quad
\langle \rho,\Gcal_t(k|x)[a]\rangle  =\langle \rho, \Tcal_t[a\otimes f_k]\rangle, \quad f_k(x)=\rme^{\rmi k\cdot x}.
\end{equation}
To get the characteristic  operators it is enough to have $\Tcal_t$ defined on $\Ascr_1$, while to get directly the instruments we need $\Tcal_t$ defined on $\Ascr_2$. As $\Tcal_t$, $t\geq 0$, turns out to be a semigroup, we call it the \emph{hybrid dynamical semigroup } associated with the stochastic dynamics \eqref{Xprocess}, \eqref{linearSME}.

\begin{proposition}\label{prop:Tt}
The map $\Tcal_t$ is linear and completely positive, which means: for any choice of the integer $N\geq 1$, one has
\begin{equation}
\sum_{i,j=1}^N \langle \psi_j| \Tcal_t[F_j^*F_i](x)\psi_i\rangle\geq 0, \qquad \forall F_j\in \Acal_2, \quad \psi_j\in \Hscr, \quad \forall x\in\Rbb^s.
\end{equation}
We have also
\begin{equation}\label{3props}
\Tcal_0=\id,
\qquad
\Tcal_t[\Ibb]=\Ibb,\qquad \norm{\Tcal_t[F]}\leq \norm{F},\qquad \Tcal_t:\Acal_2 \mapsto \Acal_2 ,
\end{equation}
\begin{equation}\label{lastprop}
\Tcal_t\circ\Tcal_r=\Tcal_{t+r} .
\end{equation}
\end{proposition}
\begin{proof}
Linearity and complete positivity follow from the same properties of the integration operation $\Ebb_\Qbb$ and of the stochastic map $\Lambda^{0,x}_{t}$ (Prop.\ \ref{prop:Lambda}).
Then, $\Tcal_0=\id$ follows from \eqref{I:t=0} and \eqref{T+instr};
$\Tcal_t[\Ibb]=\Ibb$ follows from \eqref{T+instr} and the normalization of the instruments (Def.\ \ref{def:instr}).

By using $\norm{F}^2\openone -F^*(x)F(x)\geq 0$, the positivity of $\Tcal_t$, and the preservation of the unit, we get
\[
\Tcal_t[F^*F](x)\leq \norm{F}^2 \openone.
\]
By 2-positivity, linearity, and the preservation of the unit, one gets a Kadison-like inequality \cite[Sec.\ 3.1.1]{Hol01}, \cite[Ex.\ 6.7 p.\ 107]{Hol12}:
\[
\Tcal_t[F^*F](x)\geq \Tcal_t[F^*](x)\Tcal_t[F](x)\equiv\Tcal_t[F](x)^*\Tcal_t[F](x) .
\]
By combining the two inequalities we get $\Tcal_t[F]^*\Tcal_t[F]\leq \norm F ^2 \Ibb$; this gives the contraction property in \eqref{3props} and the fact that $\Tcal_t$ maps $\Acal_2$ into itself.

By using conditional expectations, we have
\begin{multline*}
\langle \rho , \Tcal_{t+r}[a\otimes f](x)\rangle =\Ebb_{\Qbb}\left[\langle \Lambda^{0,x}_{t\;*}[\rho], \Lambda^{t,x}_{t+r}[a]\rangle f\big(X^x(t+r)\big)\right]
\\ {}=\Ebb_{\Qbb}\left[\big\langle \Lambda^{0,x}_{t\;*}[\rho], \Ebb_\Qbb\left[\Lambda^{t,x}_{t+r}[a]f\big(X^x(t+r)\big)\big|\Escr_t,X^x(t)\right]\big\rangle \right].
\end{multline*}
By the Markov property of the process $X^x$, we get
\[
\Ebb_\Qbb\left[\Lambda^{t,x}_{t+r}[a]f\big(X^x(t+r)\big)\big|\Escr_t,X^x(t)\right]=\Tcal_r[a\otimes f]\big(X^x(t)\big);
\]
so, we obtain the composition property:
\[
\langle \rho , \Tcal_{t+r}[a\otimes f](x)\rangle= \Ebb_{\Qbb}\left[\big\langle \Lambda^{0,x}_{t\;*}[\rho], \Tcal_r[a\otimes f]\big(X^x(t)\big)\big\rangle \right]
=\langle \rho, \Tcal_t\circ\Tcal_r[a\otimes f](x)\rangle.
\]
This gives the composition property \eqref{lastprop}.
\end{proof}

\begin{remark}[Convolution semigroup of instruments] When all the coefficients introduced in  Assumptions \ref{ass:classical} and \ref{Ass:1} do not depend on $x\in \Rbb^s$, from \eqref{Xprocess}, \eqref{linearSME}, \eqref{I(E|x)}, \eqref{qCKeq}, \eqref{G(k|x)} we obtain
\[
X^x(t)= x+ X^0(t), \qquad \Ical_t(E|x)= \Ical_t(E-x|0), \qquad \Gcal_{t+t'}(k|0)=\Gcal_{t'}(k|0)\circ \Gcal_t(k|0).
\]
Under these conditions the associated dynamical semigroup \eqref{T+instr} has been fully studied under the name of \emph{convolution semigroup of instruments} and generalized to cases with a classical component not only in $\Rbb^s$ \cite{Hol86,Hol88QPIII,Hol89,Bar89,BarL91,BarHL93,BarP96,Maa23}. In this construction a key point has been to exploit the analogies with the classical infinitely divisible distributions \cite{Sato99,App09}.
\end{remark}

The dual $C^*$-algebra of $\Ascr_2$ is the space containing the hybrid states and the adjoint semigroup ${\Tcal_t}^*$ gives the dynamics of the hybrid states; therefore, the hybrid dynamics is completely positive and linear. Another relevant point is Eq.\ \eqref{TtoI}, which defines the transition instruments $\Ical_t(\cdot|x)$ once a semigroup $\Tcal_t$, satisfying the properties in Proposition \ref{prop:Tt}, is given. Therefore, the dynamics ${\Tcal_t}^*$ gives not only the hybrid state at the time $t$, but also all the multi-time probabilities \eqref{genprobs}. Once again, to give a hybrid dynamical semigroup (linear and completely positive) allows to respect the general structure of a quantum theory, as discussed at the end of Sec.\ \ref{sec:transinstr} in the case of the stochastic formulation.

\subsection{The generator}

We have not studied the continuity properties in time of the dynamical semigroup \eqref{def:Tt}; however, by using stochastic calculus we can see that it is differentiable when applied to the elements of $\Ascr_3:=C^2_{\rm b}\big(\Rbb^s;\Bscr(\Hscr)\big)\subset \Ascr_1$, \marginlabel{$\Ascr_3$} the space of the operator valued functions, which are two-times differentiable and have bounded and continuous derivatives.

\begin{proposition}
The hybrid dynamical semigroup  $\Tcal_t$ can be differentiated in weak sense
\begin{equation}\label{derT}
\frac{\rmd \ }{\rmd t}\,\langle \rho, \Tcal_t[F](x)\rangle =\langle \rho, \left(\Tcal_t\circ\Kcal\right)[F](x)\rangle, \qquad \forall \rho \in \Tscr(\Hscr), \quad  \forall F \in  \Ascr_3, \quad \forall x\in\Rbb^s.
\end{equation}
The formal generator $\Kcal$ is given by
\begin{multline}\label{Kcal}
\Kcal[a\otimes f](x) = f(x)\Lcal(x)[a]
+ a \sum_{i=1}^s \frac{\partial f(x)}{\partial x_i}\, c_i (x)+a\sum_{i,j=1}^s\frac 12\, \frac{\partial^2 f(x)}{\partial x_i\partial x_j}\sum_{k=1}^{d}b_{i}^k(x) b_{j}^k(x)
\\ +\sum_{i=1}^s \frac{\partial f(x)}{\partial x_i}\sum_{k=1}^{d}b_{i}^k(x)\left(a L_k(x)+L_k(x)^\dagger a \right)
\\ + \int_{\Uscr}\biggr(\left(f\big(x+g(x,u)\big) -f(x)\right) \bigl(J(x,u)^\dagger +\openone\bigr)a\bigl( J(x,u) +\openone\bigr)
-a\ind_{\mathsf{U}_0}(u)\sum_{i=1}^s \frac{\partial f(x)}{\partial x_i}\, g_i(x,u)\biggl)  \nu(\rmd u);
\end{multline}
$\Lcal(x)$ is defined in \eqref{Lcalx}.
\end{proposition}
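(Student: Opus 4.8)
The plan is to reduce to product observables and then read off $\Kcal$ as the drift produced by It\^o's formula. By linearity it suffices to establish \eqref{derT} for $F=a\otimes f$ with $a\in\Bscr(\Hscr)$ and $f\in C^2_{\rm b}(\Rbb^s)$; the general $F\in\Ascr_3$ follows by linear extension together with an approximation argument. Fix $x\in\Rbb^s$ and $\rho\in\Tscr(\Hscr)$, and let $\sigma_t=\Lambda^{0,x}_{t\,*}[\rho]$ be the solution of the linear SME \eqref{linearSME} with the trajectory $X^x(t)$ inside its coefficients \eqref{randomcoeff} and initial datum $\sigma_0=\rho$. Since $f$ is scalar, the definition \eqref{def:Tt} becomes $\langle\rho,\Tcal_t[a\otimes f](x)\rangle=\Ebb_\Qbb\big[f\big(X^x(t)\big)\,\langle\sigma_t,a\rangle\big]$. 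Setting $A_t=\langle\sigma_t,a\rangle$ and $B_t=f\big(X^x(t)\big)$, I would apply the It\^o product rule to $\Phi_t=A_tB_t$, take the $\Qbb$-expectation to kill the $\rmd W_k$- and $\widetilde\Pi$-martingale parts, and identify the surviving drift.

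Two inputs feed the product rule. It\^o's formula applied to \eqref{Xprocess} shows that the drift of $B_t$ is the classical generator of $f$: the first-order term $\sum_i\frac{\partial f}{\partial x_i}c_i$, the diffusion term $\frac12\sum_{i,j}\frac{\partial^2 f}{\partial x_i\partial x_j}\sum_k a_i^k a_j^k$, and the jump term $\int_\Uscr\big(f(x+g)-f(x)-\ind_{\mathsf{U}_0}(u)\sum_i\frac{\partial f}{\partial x_i}g_i\big)\nu(\rmd u)$, where the subtraction $-\ind_{\mathsf{U}_0}(u)\sum_i\frac{\partial f}{\partial x_i}g_i$ is precisely the compensator produced by the $\widetilde\Pi$-integral over $\mathsf{U}_0$. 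The SME \eqref{linearSME} provides the drift $\langle\sigma_{t_-},\Lcal_t[a]\rangle$ of $A_t$, its $\rmd W_k$-coefficient $\langle\sigma_{t_-},L_{kt}^\dagger a+aL_{kt}\rangle$, and its $\widetilde\Pi$-jump amplitude $\langle\sigma_{t_-},(J_t(u)^\dagger+\openone)a(J_t(u)+\openone)-a\rangle$. Expanding $\rmd(A_tB_t)=A_{t_-}\rmd B_t+B_{t_-}\rmd A_t+\rmd[A,B]_t$ yields four drift contributions: $f\,\Lcal[a]$ from the SME drift; $a$ times the classical generator just described; the continuous covariation of the shared Wiener noises, giving the cross term $\sum_i\frac{\partial f}{\partial x_i}\sum_k a_i^k\big(aL_k+L_k^\dagger a\big)$; and the compensator $\int_\Uscr\Delta A\,\Delta B\,\nu(\rmd u)$ of the jump covariation carried by the common $\Pi$. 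The single algebraic step is that the pure-$a$ jump contribution $a\big(f(x+g)-f(x)\big)$ from the classical generator combines with the covariation term $\big[(J^\dagger+\openone)a(J+\openone)-a\big]\big(f(x+g)-f(x)\big)$ to complete the square into $\big(f(x+g)-f(x)\big)(J^\dagger+\openone)a(J+\openone)$, reproducing the last line of \eqref{Kcal}. Thus the drift of $\Phi_t$ equals $\langle\sigma_{t_-},\Kcal[a\otimes f]\big(X^x(t_-)\big)\rangle$.

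The conclusion is then immediate from the integrated form $\langle\rho,\Tcal_t[a\otimes f](x)\rangle=\langle\rho,(a\otimes f)(x)\rangle+\Ebb_\Qbb\big[\int_0^t\langle\sigma_s,\Kcal[a\otimes f]\big(X^x(s)\big)\rangle\,\rmd s\big]$ combined with the identity $\Ebb_\Qbb\big[\langle\sigma_s,G\big(X^x(s)\big)\rangle\big]=\langle\rho,\Tcal_s[G](x)\rangle$, which holds for every $G\in\Ascr_2$ and follows directly from $\sigma_s=\Lambda^{0,x}_{s\,*}[\rho]$ and \eqref{def:Tt}. Applying it to $G=\Kcal[a\otimes f]$ rewrites the right-hand side as $\langle\rho,(a\otimes f)(x)\rangle+\int_0^t\langle\rho,\Tcal_s[\Kcal[a\otimes f]](x)\rangle\,\rmd s$; differentiating in $t$ gives \eqref{derT}. (Alternatively one may compute only the $t=0$ derivative and propagate it with the semigroup law \eqref{lastprop}, writing $\Tcal_{t+h}=\Tcal_t\circ\Tcal_h$.)

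The real work lies not in the bookkeeping but in justifying that every $\nu$-integral converges, so that $\Kcal[a\otimes f]$ is well defined and the drift is $\Qbb$-integrable. On $\Uscr\setminus\mathsf{U}_0$ this is automatic, since $\nu$ is finite there and all operators are uniformly bounded by Assumption \ref{Ass:1}. On $\mathsf{U}_0$ three kinds of terms must be treated separately, all in the pairing with $\rho$: the quadratic piece $\big(f(x+g)-f(x)\big)J^\dagger aJ$ is dominated by $2\norm{f}_\infty\norm{a}\,\langle\rho,R(x)\rangle$; the pieces linear in $J$, namely $\big(f(x+g)-f(x)\big)\big(J^\dagger a+aJ\big)$, are controlled by the Cauchy--Schwarz inequality, pairing $\int_{\mathsf{U}_0}\norm{g}^2\,\nu<\infty$ against $\langle\rho,R(x)\rangle=\int_{\mathsf{U}_0}\langle\rho,J^\dagger J\rangle\,\nu<\infty$ and using the Lipschitz bound $\abs{f(x+g)-f(x)}\le\norm{\nabla f}_\infty\norm{g}$; and the remaining pure-$a$ piece is integrable through the second-order Taylor estimate $\abs{f(x+g)-f(x)-\sum_i\frac{\partial f}{\partial x_i}g_i}=O\big(\norm{g}^2\big)$. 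These bounds rest on $f\in C^2_{\rm b}$ and on Assumptions \ref{ass:classical} and \ref{Ass:1}. Finally, because $c_i$ may grow linearly in $x$, $\Kcal[a\otimes f]$ need not lie in $\Ascr_2$; the composition $\Tcal_t\circ\Kcal$ in \eqref{derT} must therefore be understood weakly and pointwise, with $\Tcal_t$ extended to functions of at most polynomial growth through the finite moments of $X^x(t)$. This is exactly the sense in which $\Kcal$ is a \emph{formal} generator.
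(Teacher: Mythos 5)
Your proposal follows essentially the same route as the paper's proof: apply the It\^o product rule to $f\big(X^x(t)\big)\langle\sigma_t,a\rangle$, take the $\Qbb$-mean so the $\rmd W_k$- and $\widetilde\Pi$-martingale parts vanish, identify the surviving drift with $\Kcal[a\otimes f]$ (including the same recombination of the pure-$a$ jump term with the jump covariation into $\big(f(x+g)-f(x)\big)\bigl(J^\dagger+\openone\bigr)a\bigl(J+\openone\bigr)$), and extend from $t=0$ to all $t$ via the semigroup property. Your extra care about convergence of the $\nu$-integrals on $\mathsf{U}_0$ and about the possible linear growth of $c_i$ (forcing a weak, pointwise reading of $\Tcal_t\circ\Kcal$) supplies justifications the paper leaves implicit in calling $\Kcal$ a formal generator, but it does not alter the argument.
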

\begin{proof}
By \eqref{linearSME}, \eqref{dfX}, and stochastic calculus, we get
\begin{multline*}
\rmd  \left(f\big(X(t)\big)\langle\sigma_t,a\rangle\right)\Big|_{X(t_-)=x} = f(x)\biggl(
\langle \sigma_{t_-},\Lcal(x)[a] \rangle \rmd t +\sum_{k=1}^d \langle \sigma_{t_-},{L_{k}(x)}^\dagger a +aL_{k}(x)\rangle \rmd W_k(t)
\\ {}+ \int_{\Uscr}\langle \sigma_{t_-}, J(x,u)^\dagger a J(x,u)+ J(x,u)^\dagger a+a J(x,u)\rangle \widetilde \Pi(\rmd u,\rmd t)\biggr)
\\ {}+ \langle\sigma_{t_-},a\rangle\Biggl( \sum_{i=1}^s \frac{\partial f(x)}{\partial x_i}\left( c_i (x)\rmd t +\sum_{k=1}^{d}b_{i}^k(x) \rmd W_k(t)\right)+\frac 12 \sum_{i,j=1}^s \frac{\partial^2 f(x)}{\partial x_i\partial x_j}\sum_{k=1}^{d}b_{i}^k(x) b_{j}^k(x)\rmd t
\\ {} +\rmd t \int_{\mathsf{U}_0}\left(f\big(x+g(x,u)\big) -f(x)-\sum_{i=1}^s \frac{\partial f(x)}{\partial x_i}\, g_i(x,u)\right)\nu(\rmd u)
\\ {}+\int_{\mathsf{U}_0}\left(f\big(x+g(x,u)\big) -f(x)\right)\widetilde \Pi(\rmd u ,\rmd t) +\int_{\Uscr\setminus \mathsf{U}_0}\left(f\big(x+g(x,u)\big) -f(x)\right) \Pi(\rmd u ,\rmd t)\Biggr)
\\ {}+\sum_{i=1}^s \frac{\partial f(x)}{\partial x_i}\sum_{k=1}^{d}b_{i}^k(x)\langle\sigma_{t_-},\,a L_k(x)+L_k(x)^\dagger a \rangle \rmd t
\\ {} + \int_{\Uscr}\left(f\big(x+g(x,u)\big) -f(x)\right)\langle \sigma_{t_-}, J(x,u)^\dagger a J(x,u)+ J(x,u)^\dagger a+a J(x,u)\rangle  \Pi(\rmd u,\rmd t).
\end{multline*}
Then, by taking the mean we have
\begin{multline*}
\Kcal[a\otimes f](x) = f(x)\Lcal(x)[a]
+ a\Biggl( \sum_{i=1}^s \frac{\partial f(x)}{\partial x_i} \,c_i (x)+\frac 12 \sum_{i,j=1}^s \frac{\partial^2 f(x)}{\partial x_i\partial x_j}\sum_{k=1}^{d}b_{i}^k(x) b_{j}^k(x)
\\ {} +\int_\Uscr\left(f\big(x+g(x,u)\big) -f(x)-\ind_{\mathsf{U}_0}(u)\sum_{i=1}^s \frac{\partial f(x)}{\partial x_i}\, g_i(x,u)\right)\nu(\rmd u)
\Biggr)
\\ {}+\sum_{i=1}^s \frac{\partial f(x)}{\partial x_i}\sum_{k=1}^{d}b_{i}^k(x)\left(a L_k(x)+L_k(x)^\dagger a \right)
\\ {}\quad{} + \int_{\Uscr}\left(f\big(x+g(x,u)\big) -f(x)\right)\left( J(x,u)^\dagger a J(x,u)+ J(x,u)^\dagger a+a J(x,u)\right)  \nu(\rmd u),
\end{multline*}
and this gives \eqref{Kcal}. The semigroup property gives the derivative at any time $t$ \eqref{derT}.
\end{proof}

Here we have not studied the problem of determining the dynamical semigroup $\Tcal_t$ from the generator \eqref{Kcal}. By construction $\Tcal_t$, defined by \eqref{def:Tt}, is a solution of \eqref{derT}; what is not proved is the uniqueness of this solution. The dynamics of the hybrid system is fully determined by the stochastic differential equations \eqref{Xprocess} and \eqref{linearSME} under Assumptions \ref{Ass:processes}, \ref{ass:classical}, \ref{Ass:1} (see Proposition \ref{prop:clpr}, Theorem \ref{theor:sigmaprop}).

\subsection{A quasi-free dynamics}\label{sec:qfdyn}
In \cite{BarW23} the structure of the most general quasi-free hybrid dynamical semigroup has been obtained. The ``quasi-free'' requirement means to map hybrid Weyl operators into a quantity proportional to hybrid Weyl operators \cite{DVV79}. In this case, the formal generator involves unbounded operators on the Hilbert space of the quantum component; so, the semigroup found in \cite{BarW23} is not in the class studied here. However, the comparison of the two formal generators gives hints to construct the stochastic formulation of the quasi-free  dynamics.

Firstly, let us take
\begin{gather*}
\Uscr=\Xi_1\oplus \Xi_0, \qquad \Xi_1=\Rbb^{2n}, \qquad \Xi_0=\Rbb^s, \qquad \mathsf{U}_0=\Sbb=\{u\in \Uscr: \abs u\leq 1\},
\\
c_i (x)=\alpha^0_i+ \sum_{j=1}^s x_j Z^{00}_{ji},\quad \alpha^0_i\in \Rbb \quad Z^{00}_{ji}\in\Rbb, \quad i=1,\ldots, s,
\\
b_{j}^k(x)=b_{j}^k\in\Rbb, \qquad g(x,u)=g(u)=P_0u \quad g_i(u)= u_{2n+i},
\end{gather*}
where $P_0$ and $P_1$ are the orthogonal projections $P_i\Uscr=\Xi_i $, $i=1,2$. With these choices
Assumptions \ref{Ass:processes} and \ref{ass:classical} hold, and  Eq.\ \eqref{Xprocess} becomes
\begin{multline}
\rmd X_i(t) = \alpha^0_i\rmd t + \sum_{j=1}^s X_j(t_-) Z^{00}_{ji}\rmd t +\sum_{k=1}^{d}b_{i}^k \rmd W_k(t)
\\ {}+\int_{\Sbb} g_{i}(u) \widetilde\Pi(\rmd u,\rmd t)
+ \int_{\Uscr\setminus \Sbb} g_{i}(u) \Pi(\rmd u,\rmd t), \qquad i=1,\ldots, s.
\end{multline}

Then, we represent the quantum component in the Hilbert space $\Hscr=L^2(\Rbb^n) $; by $\hat q_j$ and $\hat p_j$ we denote the usual position and momentum operators. We collect these operators in the vector $R^\T = (\hat p_1, \ldots,\hat p_n,- \hat q_1, \ldots ,-\hat q_n)$, where $\T$ means transposition; moreover, we introduce the usual Weyl operators
\[
W_1(P_1u)=\exp\left\{\rmi R^\T P_1u \right\} = \exp\left\{\rmi \sum_{j=1}^n \left(u_j\hat p_j-u_{n+j}\hat q_j\right)\right\}.
\]
For the operators appearing in Assumption \ref{Ass:1} we take
\begin{subequations}
\begin{gather}\label{qfJ}
J(x,u)=J(u)=- W_1(P_1u)-\openone,
\\ \label{qfLk}
L_k(x)=L_k=-\rmi\sum_{l=1}^{2n}R_ld_{l}^k=\rmi \sum_{l=1}^n \left(d_{n+l}^k \hat q_l -d_l^k \hat p_l\right), \qquad d_l^k\in \Cbb,
\\
H(x)=H_{\rm q}+H_x
+\int_{\Uscr}\biggl(\frac {\rmi}2 \left(-W_1( P_1u) +W_1( P_1u)^\dagger\right)-\ind_{\Sbb} (P_1u)^\T  R \biggr) \nu(\rmd u),
\\
H_{\rm q}=\beta^\T R+\frac 12 \,R^\T P_1 \sigma D^{11} \sigma^\T P_1 R,\quad \beta\in \Xi_1,
\\
H_x= x^\T P_0 ZP_1 R=\sum_{i=1}^s \sum_{j=1}^{2n} x_i Z^{01}_{ij}R_j.
\end{gather}
\end{subequations}
With these choices we have that the formal generator \eqref{Kcal} becomes
\begin{multline}
\Kcal[a\otimes f](x) = f(x)\Big\{\rmi[H_{\rm q}+H_x,a]+\sum_{k=1}^d\sum_{r,l=1}^{2n}d^k_l\overline{d_r^k}\left( (\sigma R)_r a (\sigma R)_l-\frac 12 \left\{ (\sigma R)_r  (\sigma R)_l,a\right\}\right)
\\ {}+ a \sum_{i=1}^s \left(\alpha^0_i+ \sum_{j=1}^s x_j Z^{00}_{ji}\right)\frac{\partial f(x)}{\partial x_i}+a\sum_{i,j=1}^s\frac 12\, \frac{\partial^2 f(x)}{\partial x_i\partial x_j}\sum_{k=1}^{d}b_{i}^k b_{j}^k
\\ {}+\sum_{i=1}^s \frac{\partial f(x)}{\partial x_i}\sum_{k=1}^{d}b_{i}^k\sum_{l=1}^{2n}\left(\rmi (\sigma R)_l \overline{d_l^k}a-\rmi a (\sigma R)_l d_l^k\right)
+ \int_{\Uscr}\biggl(f\big(x+g(u)\big)   W_1(\sigma P_1u)^\dagger a  W_1(\sigma P_1u) -af(x)
\\ {}-\ind_{\Sbb}(u)\rmi f(x) \left[(P_1u)\T\sigma R,a\right]-\ind_{\Sbb}(u)a\sum_{i=1}^s \frac{\partial f(x)}{\partial x_i}\, g_i(u) \biggr) \nu(\rmd u);
\end{multline}
then, it is possible to check that it coincides with the formal generator given in \cite[Proposition 5]{BarW23}.

However, the operators $L_k$ and $H(x)$ are unbounded, so that the construction of Secs.\ \ref{sec:SSE}, \ref{sec:sigmat} becomes purely formal.
In any case \eqref{linearSME}, \eqref{Lcalx} take the form
\begin{multline}
\rmd \langle \sigma_t,a \rangle=
\langle \sigma_{t_-},\Lcal_t[a] \rangle \rmd t +\rmi \sum_{k=1}^d \sum_{l=1}^{2n}\langle \sigma_{t_-},R_l\overline{d^k_l} a - a R_ld^k_l\rangle \rmd W_k(t)
\\ \label{SMEqf} {}+ \int_{\Uscr}\langle \sigma_{t_-}, W_1(P_1u)^\dagger a W_1( P_1u)-a\rangle \widetilde \Pi(\rmd u,\rmd t),
\end{multline}
\begin{multline}
\Lcal(x)[a]
=\rmi[H_{\rm q}+H_x,a]+\sum_{k=1}^d\sum_{r,l=1}^{2n}d^k_l\overline{d_r^k}\left( R_r a R_l-\frac 12 \left\{  R_r   R_l,a\right\}\right)
\\ {}+\int_\Uscr \left(W_1( P_1u)^\dagger a W_1( P_1u)-a -\rmi \ind_\Sbb(u)\left[(P_1u)^\T R,a\right]\right) \nu(\rmd u).
\end{multline}
Due to the fact that Assumption \ref{Ass:1} does not hold, the problem of existence and uniqueness of the solution of \eqref{SMEqf} is open; however, the identification of the two generators has given an explicit stochastic equation to study.

By \eqref{defmk}, \eqref{qfJ}, \eqref{qfLk} we get
\[
I_t(u)=1, \qquad m_{k}(t)= -2 \IM \sum_{l=1}^{2n}d_{l}^k\langle \widehat \sigma_{t_-} ,  R_l \rangle.
\]
The first equality implies that, under the physical probability $\Pbb$, $\Pi(\rmd u,\rmd t)$ remains a Poisson measure with the same compensator. In the second equality the unbounded operators $R_l$ appear; so, to have $m_k(t)$ well defined depends on the choice of the coefficients $d_{l}^k$ and of the initial state. When $\abs{m_{k}(t)}<+\infty$, from Eqs.\ \eqref{XnewPa}, \eqref{XnewP}, \eqref{nonlinearSME} we get the stochastic equation for the classical component
\begin{multline}
\rmd X_i(t) =\left(\alpha^0_i\rmd t + \sum_{j=1}^s X_j(t_-) Z^{00}_{ji}\rmd t +\sum_{k=1}^{d}b_{i}^k m_k(t)\right)\rmd t
+\sum_{k=1}^{d}b_{i}^k\rmd \widehat W_k(t)
\\ {}+\int_{\mathsf{U}_0} g_{i}(u) \widetilde\Pi(\rmd u,\rmd t)
+ \int_{\Uscr\setminus \mathsf{U}_0} g_{i}(u) \Pi(\rmd u,\rmd t),
\end{multline}
and the non-linear SME
\begin{multline}
\rmd \langle \widehat\sigma_t,a \rangle= \langle \widehat\sigma_{t_-},\Lcal_t[a] \rangle \rmd t+\rmi \sum_{k=1}^d \sum_{l=1}^{2n}\langle \widehat \sigma_{t_-},R_l\overline{d^k_l} a - a R_ld^k_l\rangle \rmd\widehat W_k(t)
\\ {}-\sum_{k=1}^d  m_k(t)\langle\widehat \sigma_{t_-},a\rangle \rmd \widehat  W_k(t)
+  \int_{\Uscr}\langle \widehat\sigma_{t_-}, W_1( P_1u)^\dagger a W_1( P_1u)-a\rangle \widetilde \Pi(\rmd u,\rmd t).
\end{multline}

In \cite{BarW23} it has been shown also that the adjoint of a quasi-free dynamical semigroup leaves invariant the sector of the state space based on classical densities with respect to Lebesgue measure, a fact which could not hold in the general case as discussed in \ref{app:classM}. So, the ``quasi-free'' requirement allows for unbounded quantum operators, but at the end it is a strong restriction on the possible structures.

\subsection{The master equation for hybrid states.} To have the explicit expression \eqref{Kcal} of the formal generator can help in clarifying the various physical interactions involved in the hybrid dynamics and it allows to compare the trajectory approach with the approaches based on master equations. Let us exclude the cases discussed at the end of \ref{app:classM} and assume that the pre-adjoint ${\Tcal_t}_*$ is well defined and that it leaves invariant $\Tscr(\Hscr)\otimes L^1(\Rbb^s)$. Then, we can take as hybrid states the $x$-dependent trace class operators $\varrho(x)$ such that $\varrho(x)\geq 0$ and $\Tr\{\varrho(x)\}$ is a probability density with respect to the Lebesgue measure; then, we set $\varrho_t(x)={\Tcal_t}_*[\varrho_0](x)$. To get the explicit master equation for $\varrho_t(x)$,  we assume a smooth $x$-dependence and we simplify the jump contributions by taking $g(x,u)=g(u)$.
By time-differentiation and using \eqref{Kcal} and \eqref{Lcalx}, we get the (heuristic) hybrid master equation
\begin{equation}\label{hME}
\frac {\rmd \ }{\rmd t}\,\varrho_t(x) =\Kcal_*[\varrho_t](x) = \Kcal_0[\varrho_t](x)+ \Kcal_{\rm diff}[\varrho_t](x) + \Kcal_{\rm jump}[\varrho_t](x),
\end{equation}
\begin{subequations}
\begin{equation}
\Kcal_0[\varrho_t](x)=-\rmi[H(x),\varrho_t(x)]- \sum_{i=1}^s \frac{\partial \ }{\partial x_i}\bigl( c_i (x)\rho_t(x)\bigr),
\end{equation}
\begin{multline}
\Kcal_{\rm diff}[\varrho_t](x)=\sum_{k=1}^{d}\Bigg( \frac 12\sum_{i,j=1}^s \frac{\partial^2 \ \ }{\partial x_i\partial x_j}\left(b_{i}^k(x) b_{j}^k(x)\rho_t(x)\right)-\frac 12 \left\{L_{k}(x)^\dagger  L_{k}(x), \varrho_t(x) \right\}
\\  {}+ L_{k}(x) \varrho_t(x) L_{k}(x)^\dagger-\sum_{i=1}^s \frac{\partial \ }{\partial x_i}\Bigl(b_{i}^k(x)\bigl( L_k(x)\varrho_t(x)+\varrho_t(x)L_k(x)^\dagger  \bigr)\Bigr)\Bigg),
\end{multline}
\begin{multline}
\Kcal_{\rm jump}[\varrho_t](x) =\int_{\Uscr}\nu(\rmd u)\Biggl(
 \Bigl(J\big(x-g(u),u\big) +\openone\Bigr)\varrho_t\big(x-g(u)\big)\Bigl( J\big(x-g(u),u\big)^\dagger +\openone\Bigr)
\\ {}-\frac 12 \left\{\Bigl(J(x,u)^\dagger +\openone\Bigr)\Bigl( J(x,u)+\openone\Bigr),\varrho_t(x)\right\}
+\frac 12 \left[J(x,u)^\dagger- J(x,u), \varrho_(x)\right]+\ind_{\mathsf{U}_0}(u)\sum_{i=1}^s g_i(u)\,\frac{\partial \varrho_t(x)}{\partial x_i}  \Biggr).
\end{multline}
\end{subequations}

This structure allows to compare, at a heuristic level, the present approach  to other approaches based on master equations for hybrid states, so giving an idea of possible physical applications. References \cite{Op+22,Opp+23,Opp+23grav,OppWD22} introduce master equations of jump type which are particular cases of \eqref{hME} with $\Kcal_{\rm diff}=0$. Then, the diffusive case is usually introduced by a suitable limit un the jump case; references \cite{LOW22,Op+22,Opp+23,Opp+23grav,OppWD22,Dio23} give master equations of diffusive type which are particular cases of \eqref{hME} with $\Kcal_{\rm jump}=0$. Examples with two-level systems and harmonic oscillators are constructed. The most relevant application is the attempt of connecting gravity and quantum matter in the so called ``Newtonian limit'' \cite{OppWD22,Opp+23grav}.

\section{Examples} \label{sec:ex} In this section we collect some simple examples of hybrid dynamics, just to illustrate the possibilities of the general approach, developed in the previous sections.

\subsection{A purely classical case}
As discussed in Sec.\ \ref{sec:nprob}, the hybrid dynamics can produce a change of the underling probability law. This can happen even in a purely classical case.

Let us take $\Hscr=\Cbb$, which means that no quantum component is present. Now the quantities introduced in Assumption \ref{Ass:1} become complex functions; conditions \eqref{somecond} continue to hold for the functions $L_k(x)$ and $J(x,u)$. In this ``classical'' case we get from \eqref{LtJt}, \eqref{defmk}
\begin{equation*}%\label{newmI}
m_k(t)=2\RE L_k\big(X(t_-)\big), \qquad I_t(u)= \abs{J\big(X(t_-),u\big)+1}^2.
\end{equation*}
Then, there is no dynamics in the quantum component, as \eqref{nonlinearSME} gives $\rmd \widehat\sigma_t=0$, while Eq.\ \eqref{linearSME} reduces to \eqref{eq:pt}. So, the probability density $p_t$ has the non trivial evolution \eqref{eq:pt} and it produces   the new probability $\Pbb$ \eqref{physprob}, even without interaction with a quantum component.

Under $\Pbb$, the classical process $X(t)$ satisfies the stochastic differential equation \eqref{XnewPa}, which looks like very similar to the original stochastic equation \eqref{Xprocess} for a Markov process, apart from a peculiar difference:
under $\Pbb$ the jump noise $\Pi(\rmd u,\rmd t)$ is no more a Poisson measure, but a point process whose law depends on the solution $X(t)$ of the stochastic equation itself, as discussed before \eqref{hatPi}. The advantage of working under the reference probability $\Qbb$ is to avoid the mathematical problems related to the involved stochastic equation \eqref{XnewPa}; the construction of the classical component $X(t)$ is split into the two standard stochastic differential equations \eqref{Xprocess} and \eqref{eq:pt}.

The effect of a non trivial probability density is apparent also from the formal generator \eqref{Kcal}, which now is purely of classical type:
\begin{multline*}
\Kcal[f](x) = \Biggl\{
\sum_{i=1}^s \frac{\partial f(x)}{\partial x_i}\left(c_i(x)+2\sum_{k=1}^{d}b_{i}^k(x)\RE L_k(x)\right)
+\sum_{i,j=1}^s\frac 12\, \frac{\partial^2 f(x)}{\partial x_i\partial x_j}\sum_{k=1}^{d}b_{i}^k(x) b_{j}^k(x)
\\ {} + \int_{\Uscr}\left(\left(f\big(x+g(x,u)\big) -f(x)\right) \abs{ J(x,u) +1}^2 -\ind_{\mathsf{U}_0}(u)\sum_{i=1}^s \frac{\partial f(x)}{\partial x_i}\, g_i(x,u)\right)  \nu(\rmd u)\Biggr\}.
\end{multline*}
By comparing it with the formal generator \eqref{Kcl}, we see immediately the corrections due to the change of probability: $c_i(x) \to c_i(x)+2\sum_{k=1}^{d}b_{i}^k(x)\RE L_k(x)$ in the first term and $1 \to \abs{ J(x,u) +1}^2$ in the jump term. While the first modification is only a redefinition of $c_i(x)$, the second modification is non trivial. Once again, the last modification is  related to the fact that $\Pi(\rmd u,\rmd t)$  is no more a Poisson process, but a counting process with a random compensator $\abs{J\big(X(t_-),u\big)+1}^2\nu(\rmd u)\rmd t$.

To take a non trivial quantum component ($\Hscr\neq \Cbb$), but with all the coefficients in Assumption \ref{Ass:1} proportional to the identity $\openone$, would give the same result on the classical component; indeed, this choice of the coefficients implies that the classical component does not receive any back-action from the quantum component. Also intermediate situations can arise with suitable choices of the coefficients.
This example shows that a change of probability is not a purely quantum effect, but it may be due to the reaction of the quantum component on the classical one and to auto-interactions in the classical component.

\subsection{Jumps of discrete type}\label{sec:Udiscr}

In the classical case, described by \eqref{dfX}, the integral over $\Uscr$ has the role of giving rise to a continuity of types of jumps; moreover, the integral over $\Uo_0$ allows for `infinitely many small jumps'. These features are inherited by the hybrid case of Sec.\ \ref{sec:SDE}. A simpler case, with sums instead of integrals, can help in the interpretation of the various terms.

In this section we consider the case of discrete types of jumps, when $\Uscr$ is a discrete and finite set:
\[
\Uscr=1, \ldots ,l, \qquad \mathsf{U}_0 =\emptyset, \qquad \nu(\{k\})=\lambda_k, \qquad \Pi(\{k\},\rmd t)=\rmd N_k(t).
\]
In this case it is convenient a change of notation to simplify the expressions $J(x,u)+\openone$ appearing in the generator \eqref{Kcal} and in the SMEs \eqref{linearSME}, \eqref{nonlinearSME}. To get this, we make the replacements
\[
J(x,k) \to J_k(x)-\openone, \qquad H(x)\to H(x)-\frac \rmi 2 \sum_{k=1}^l\lambda_k\left( J_k(x)^\dagger -J_k(x)\right).
\]
The pre-adjoint of the Liouville operator \eqref{Lcalx}, acting on the trace class,  becomes
\begin{multline}
\Lcal(x)_*[\rho]=
-\rmi[H(x),\rho]+\sum_{k=1}^d\left( L_{k}(x) \rho L_{k}(x)^\dagger-\frac 12 \left\{L_{k}(x)^\dagger  L_{k}(x),\rho\right\}\right)
\\ \label{Lcal*}{}+\sum_{k=1}^l\lambda_k\left(J_k(x)\rho J_k(x)^\dagger -\frac 12 \left\{J_k(x)^\dagger  J_k(x),\rho\right\}\right).
\end{multline}
Moreover, as in \eqref{Lcalt}, \eqref{defmk} we introduce $m_{k}(t)$ and we set
\begin{equation}\label{LJI}
{\Lcal_t}_*=\Lcal\big(X(t_-)\big)_*,\qquad J_{kt}=J_k\big(X(t_-)\big),\qquad I_{kt}=\langle \widehat \sigma_{t_-},\, J_{kt}^\dagger J_{kt}\rangle,
\end{equation}

Under the reference probability $\Qbb$, $N_k(t), \; k=1,\ldots ,l,$ are independent Poisson processes of intensities $\lambda_k$; we set $\rmd  \widetilde N_k(t)=\rmd N_k(t)-\lambda_k\rmd t$. Then, the hybrid dynamics is given by the coupled SDEs  \eqref{Xprocess}, \eqref{linearSME},  which now become
\begin{equation}
\rmd X_i(t) =c_i \big(X(t_-)\big)\rmd t +\sum_{k=1}^{d}b_{i}^k\big(X(t_-)\big) \rmd W_k(t)
+\sum_{k=1}^l g_{i}\big(X(t_-),k\big) \rmd N_k(t),
\end{equation}
\begin{equation}\label{klSME}
\rmd \sigma_t=  {\Lcal_t}_*[\sigma_{t_-}]  \rmd t +\sum_{k=1}^d\left( L_{kt}\sigma_{t_-}+\sigma_{t_-}{L_{kt}}^\dagger\right)  \rmd W_k(t)
+ \sum_{k=1}^l\left(J_{kt} \sigma_{t_-} J_{kt}^\dagger -\sigma_{t_-} \right) \rmd \widetilde N_k(t).
\end{equation}

Under the physical probability $\Pbb$, the $N_k(t)$  become counting processes of intensities $\lambda_kI_{kt}$ and we set $\rmd \widehat N_k(t)=\rmd N_k(t)-\lambda_kI_{kt} \rmd t$.
We assume to have $I_{kt}\neq 0$. Now, the system of dynamical SDEs \eqref{XnewPa} and
\eqref{nonlinearSME} becomes
\begin{multline}
\rmd X_i(t) =\left(c_i \big(X(t_-)\big)+ \sum_{k=1}^{d}b_{i}^k\big(X(t_-)\big)m_k(t) +  \sum_{k=1}^l g_{i}\big(X(t_-),k\big) \lambda_kI_{kt} \right)\rmd t
\\ \label{XnewPakn}{}
+\sum_{k=1}^{d}b_{i}^k\big(X(t_-)\big) \rmd \widehat W_k(t)+ \sum_{k=1}^l g_{i}\big(X(t_-),k\big) \rmd \widehat N_k(t),
\end{multline}
\begin{equation}\label{knlSME}
\rmd  \widehat\sigma_t = {\Lcal_t}_* \big[\widehat\sigma_{t_-} \big]\rmd t +\sum_{k=1}^d \left( \widehat \sigma_{t_-}{L_{kt}}^\dagger  +L_{kt}\widehat \sigma_{t_-}-m_k(t)\widehat \sigma_{t_-}\right) \rmd \widehat  W_k(t)
+ \sum_{k=1}^l\left(\frac{J_{kt}\widehat \sigma_{t_-} J_{kt}^\dagger}{I_{kt}}-\widehat \sigma_{t_-}\right)\rmd \widehat N_k(t).
\end{equation}
We have written the linear SME \eqref{klSME} and the non-linear one \eqref{knlSME} in strong form because only finite sums are involved and not integrals as in \eqref{linearSME}, \eqref{nonlinearSME}. Recall that the conditional mean of the driving increments, given the past frozen, vanishes; in symbols:
\[
\Ebb_\Pbb\left[\rmd \widehat W_k(t)\Big|\Fscr_{t_-}\right]=0,\qquad \Ebb_\Pbb\left[\rmd \widehat N_k(t)\Big|\Fscr_{t_-}\right]=0.
\]
This is because, under $\Pbb$, $\widehat W_k(t)$ is a Wiener process and $\widehat N_k(t)$ is a compensated counting process; both processes are martingales.

Let us comment the dynamics \eqref{XnewPakn} of the classical component.  By the vanishing of the means of the increments of the driving processes, the terms in the second line of \eqref{XnewPakn} can be interpreted as fluctuating forces; the first term is a continuous contribution and the second one gives also jumps. The quantum component contributes to these two terms only through the probability law of the processes $\widehat N_k(t)$. Instead, in the first line, a drift term appears.  The first term in the drift does not depend on the quantum component and it represents some auto-interaction of the classical component; it could be generated by a classical Hamiltonian function. The two other terms in the drift can again depend on the classical component, as all the coefficients can contain $X(t_-)$, but they depend also on the quantum state through the ``quantum expectations'' $m_{k}(t)$ and $I_{tk}$. They represent a back-reaction of the quantum component on the classical one. Let us stress that the presence of these back-reaction drifts needs the non-vanishing of the fluctuating forces, because they are connected by the coefficients $b^k(x)$ and $g(x,k)$: the classical evolution must be random if there is transfer of information from the quantum component. In the case without jumps, particular cases of \eqref{XnewPa}, including the back-reaction, have been obtained also in other approaches,  see for instance \cite[(49)]{Dio23}, \cite[(7)]{LOW22}. Again similar non-linear back-reaction terms are obtained in completely different schemes in which the hybrid dynamics is obtained from microscopic models via Ehrenfest-like approximations \cite{Pomar+23}.

A way to compare the trajectory approach to hybrid systems to approaches based on moments \cite{BriU23,Pomar+23} would be to compute the multi-time correlations of the classical component by using \eqref{XnewPakn} and stochastic calculus, or by using the characteristic functions of the multi-time probabilities \eqref{genprobs} and the characteristic operator given in \eqref{TtoI}. The computations of the multi-time moments in the case of continuous measurements and purely diffusive case are developed in \cite[Sec.\ 4.3]{BarG09}. For the first moment it is enough to take the mean of \eqref{XnewPakn}; we get directly
\[
\frac{\rmd \ }{\rmd t}\Ebb_\Pbb[ X(t)] =\Ebb_\Pbb\left[c \big(X(t)\big)\right]+ \sum_{k=1}^{d}\Ebb_\Pbb\left[b^k\big(X(t)\big)m_k(t)\right] +  \sum_{k=1}^l \Ebb_\Pbb\left[g\big(X(t),k\big) \lambda_kI_{kt} \right].
\]
Here we recognize immediately the terms discussed above: the mean of the auto-interaction of the classical component and of the two terms giving the back-action of the quantum component on the classical system.

About the dynamics \eqref{knlSME} of the quantum component, we note the dependence on $X(t_-)$ in the Liouville operator \eqref{Lcal*}, \eqref{LJI}. This gives the non-Markovian behaviour of the mean quantum dynamics, already commented after Eq.\ \eqref{redqeq}. When present, this $x$-dependence in \eqref{Lcal*} represents the action of the classical component on the quantum one, say a classical feedback or an external control \cite{WisM10,Till24,BarG12,Jac14}. It can be shown that \eqref{knlSME} preserves the pure states, as it is equivalent to the non-linear SSE recalled in Remark \ref{rem:nlSSE}. Often, this equation is introduced as \emph{stochastic unravelling} of the dissipative dynamics of the mean state \cite{WisM10,Dio23,LOW22}.

Finally, the formal hybrid generator \eqref{Kcal} becomes
\begin{multline}
\Kcal[a\otimes f](x) = f(x)\widetilde\Lcal(x)[a]
+ a \sum_{i=1}^s \frac{\partial f(x)}{\partial x_i}\, c_i (x)+a\sum_{i,j=1}^s\frac 12\, \frac{\partial^2 f(x)}{\partial x_i\partial x_j}\sum_{k=1}^{d}b_{i}^k(x) b_{j}^k(x)
\\  {}+\sum_{i=1}^s \frac{\partial f(x)}{\partial x_i}\sum_{k=1}^{d}b_{i}^k(x)\left(a L_k(x)+L_k(x)^\dagger a \right)
+ \sum_{k=1}^l\lambda_k f\big(x+g(x,k)\big)  J_k(x)^\dagger a J_k(x) ,
\end{multline}
\[
\widetilde\Lcal(x)[a]=
\rmi[H(x),a]+\sum_{k=1}^d\left( L_{k}(x)^\dagger a L_{k}(x)-\frac 12 \left\{L_{k}(x)^\dagger  L_{k}(x),a\right\}\right)
-\frac 12 \sum_{k=1}^l\lambda_k\left\{J_k(x)^\dagger  J_k(x),a\right\}.
\]

As an example, by using the classical component given by \eqref{Diodiscr}, and by taking $L_k(x)=0$, it is possible to construct discrete hybrid master equations, based on the classical Pauli equation, as those introduced in \cite{Dio14,Dio23}.

\subsubsection{Unitary jumps.}\label{sec:Uj}
A particularly simple case is when the jump operators $J_k(x)$ are proportional to unitary operators (as in Sec.\ \ref{sec:qfdyn}) and there is no diffusion contribution; so we take $L_k(x)=0$ and
\begin{gather*}
J_k(x)=\beta_k(x)U_k(x), \quad \beta_k(x)> 0,\quad U_k(x)^\dagger U_k(x)=\openone,
\\ \Rightarrow \quad J_k(x)^\dagger J_k(x)=\beta_k(x)^2\openone, \quad I_k(x)=\beta_k(x)^2.
\end{gather*}
Now, from Eqs.\  \eqref{klSME}, \eqref{Lcal*}, \eqref{knlSME} we get
\begin{equation}
\Lcal(x)_*[\rho]=
-\rmi[H(x),\rho]+\sum_{k=1}^l\lambda_k \beta_k(x)^2\left(U_k(x)\rho U_k(x)^\dagger -\rho\right),
\end{equation}
\begin{equation}
\rmd \sigma_t=  -\rmi[H_t,\sigma_{t_-}]  \rmd t
+ \sum_{k=1}^l\beta_{kt}^{\;2}\left(U_{kt} \sigma_{t_-} U_{kt}^\dagger -\sigma_{t_-} \right) \rmd  N_k(t),\qquad \beta_{kt}=\beta_k\big(X(t_-)\big),
\end{equation}
\begin{equation}\label{Uapos}
\rmd \widehat \sigma_t=-\rmi\left[ H_t,\,\widehat\sigma_{t_-}\right]\rmd t+ \sum_{k=1}^l\left(U_{kt}\widehat \sigma_{t_-} U_{kt}^\dagger-\widehat \sigma_{t_-}\right)\rmd N_k(t),\qquad U_{kt}=U_k\big(X(t_-)\big).
\end{equation}
Note that $I_{kt}=\beta_k\big(X(t_-)\big)^2$ does not depend on the quantum state and this gives that, under $\Pbb$, the full law of the processes $N_k(t)$ remains  independent of the quantum state.

\subsection{Entanglement creation and preservation}
An interesting application of the trajectory formalism is to the entanglement dynamics of two quantum systems without direct interaction; we can have an a priori dynamics \eqref{redqeq} which destroys any initial entanglement, while it is totally or partially protected in a single quantum trajectory \cite{BarGent,VogS10,VGCBB10,MCVF11}. To give an example, we consider the simple dynamics discussed in Sec.\ \ref{sec:Uj}; now, $\Hscr=\Cbb^2\otimes \Cbb^2$. As classical component we take the Poisson processes themselves:
\[
l=s, \qquad X_i(0)=0, \qquad X_i(t)=N_i(t).
\]
With this choice, \eqref{Uapos} gives the dynamics of the conditional state \eqref{apos}; as in the general case, it sends pure states into pure states. So, we take a pure state at time $t=0$ and we can write $\widehat \sigma_t= |\widehat \psi_t\rangle \langle \widehat \psi_t|$.

To quantify the entanglement we consider the concurrence \cite{Woo98}, whose definition is recalled in \ref{app:conc}. Now we have the a priori state and the conditional state, which is random. Due to the definition \eqref{Cmixed} of the concurrence of a mixed state, it is easy to see that the concurrence of the a priori state $\eta_t$ is a lower bound for the mean concurrence of the conditional state $\widehat \psi_t$:
\begin{equation*}
\Ebb_\Pbb \big[ C(\widehat \psi_t)\big]= \Ebb_\Pbb \big[\abs{ \chi(t)}\big] \geq C(\eta_t), \qquad \chi(t)= \chi\big(\widehat \psi_t\big)=\langle \T \widehat\psi_t|\sigma_y^1\sigma_y^2 \widehat\psi_t\rangle .
\end{equation*}
By  \eqref{Uapos}, we get the evolution equation for the random chi-quantity:
\begin{equation}\label{dchi}
\rmd \chi(t)=\left(-\rmi\langle \T  H_t\widehat\psi_t|\sigma_y^1\sigma_y^2 \widehat\psi_t\rangle-\rmi \langle \T \widehat\psi_t|\sigma_y^1\sigma_y^2 H_t\widehat\psi_t\rangle\right)\rmd t
+ \sum_{k=1}^s\left(\langle \T U_{kt}\widehat\psi_t|\sigma_y^1\sigma_y^2 U_{kt}\widehat\psi_t\rangle-\chi(t)\right)\rmd N_k(t).
\end{equation}

\paragraph{Local operators.}
Let us consider the case of only local operators in \eqref{Uapos}, say
\[
H=\frac \omega 2 \left(\sigma^1_z+\sigma^2_z\right),\qquad U_k(x)=U_k= \begin{cases} \rme^{\rmi \phi_k \vec \sigma\cdot \vec \epsilon^k}\otimes \openone \quad  {}& k=1,\ldots,l_1,
\\  \openone\otimes\rme^{\rmi \phi_k \vec \sigma\cdot \vec \epsilon^k} & k=l_1+1,\dots, l, \end{cases}
\]
with $\phi_k\in \Rbb$, \ $\vec\epsilon^k\in \Rbb^3$, \ $\abs{\vec \epsilon^k}=1$. \ By \eqref{dchi}, \eqref{ABX} we get $\rmd \chi(t)=0$: the conditional concurrence is constant in time. If the initial state is maximally entangled, we have $\Ebb_\Pbb \big[ C(\widehat \psi_t)\big]=1$, $\forall t\geq 0$.

On the contrary, the concurrence of the a priori state can decay. For instance let us take
\begin{equation*}
\omega >0, \qquad l=2,\quad l_1=1, \quad \lambda_k=\lambda>0, \quad \beta_k=\beta\neq 0, \quad \phi_k=\frac \pi 2 , \quad \vec \epsilon^k=(1,\, 0,\, 0) \qquad  \Rightarrow \qquad U_k=\rmi \sigma_x^k.
\end{equation*}
Then, by taking the $\Pbb$-mean of Eq.\ \eqref{Uapos}, we get the master equation for the a priori states
\begin{equation}\label{masterxx}
\frac{\rmd \ }{\rmd t}\, \eta_t=\sum_{k=1}^2\left(-\rmi\,\frac \omega 2 \left[\sigma^k_z,\eta_t\right] +\lambda\beta^2\left(\sigma_x^k\eta_t \sigma^k_x-\eta_t\right)\right).
\end{equation}
For this equation there is a unique equilibrium state $\eta_\infty=\openone/4$, which is separable; so, the a priori concurrence decays to zero. In spite of this, if we know the values of $N_1(t)$ and $N_2(t)$ at all times, we know the state of the quantum component, which maintains the initial entanglement.

\paragraph{Non-local operators.}
By taking the same Hamiltonian and the same constants $\lambda $ and $\beta$, but
$U_1=\frac1{\sqrt 2}\left(\sigma_x^1+\rmi\sigma_x^2\right)$ and $U_2=\frac1{\sqrt 2}\left(\sigma_x^1-\rmi\sigma_x^2\right)$, we get again the master equation \eqref{masterxx} for the a priori states. However, in this case, the entanglement of the conditional states can vary; in particular entanglement can be created \cite{BarGent}. For instance, if the initial state is $|11\rangle$, at a jump we have
\[
|11\rangle \to U_k |11\rangle =\frac 1{\sqrt 2}\left(|01\rangle\pm \rmi |10\rangle\right):
\]
we get a maximally entangled state from a separable one.

We have constructed two simple models with the same quantum master equation \eqref{masterxx} for the mean state, a master equation which destroys any initial entanglement. On the contrary, the two decompositions in ensembles of conditional states are different; the first one preserves any initial entanglement, while the second one allows for increasing and decreasing of entanglement. Let us stress that the two decompositions are physically different, as they depend on different interactions with the classical component of the hybrid system.

\paragraph{Hidden entanglement.} For the two models above we can speak of the phenomenon of \emph{hidden entanglement}, a term which is sometimes used for situations in which strong entanglement is present in a statistical ensemble, while it is less relevant in the associated mean state, see for instance \cite{LoFBen14,LoFC18}.

The systems considered in \cite{LoFBen14,LoFC18} are different; typically, they consider a unitary local transformation which hits an entangled system with a probability 1/2, so generating a mixed state. Eventually, revivals of the hidden entanglement can be obtained by local pulses.
We can take this as a suggestion: some kinds of ``one shot'' transformations can be introduced also in our ``Markovian'' framework

For instance, we can have a single shot which hits the second qubit only once at a random time and leaves free the first qubit. Let us take
\[
H=\frac \omega 2 \left(\sigma^1_z+\sigma^2_z\right), \qquad l=1, \qquad X(t)=N(t), \qquad J(x)=g(x)\beta \openone\otimes U, \quad \beta>0,
\]
where $U$ is a unitary operator on $\Cbb^2$ and $g(x)$ is a continuous real function such that $g(0)=1$ and $g(x)=0$ for $x\geq 1$. As $X(t)=N(t)$ takes only integer values, we have that $g\big(N(t)\big)$ behaves as Kroneker delta in $0$, and Eq.\ \eqref{knlSME} gives
\begin{equation*}
\rmd \widehat \sigma_t=-\rmi\left[ H,\,\widehat\sigma_{t_-}\right]\rmd t+ \begin{cases}
\left(U\widehat \sigma_{t_-} U^\dagger-\widehat \sigma_{t_-}\right)\rmd N(t), \quad & \text{if } \ N(t_-)=0,
\\ 0 & \text{if } \ N(t_-)\geq 1;
\end{cases}\end{equation*}
$N(t) $ turns out to be a Poisson process of intensity $\lambda\beta^2$. The function $g(x)$ has been taken to be continuous to have that $J(x)$ satisfies Assumption \ref{Ass:1}. This trick allows to introduce single shots as in \cite{LoFBen14,LoFC18}, with the difference that the time of the jump is the random waiting time of a Poisson process.

\subsection{Control}

Firstly, let us note that we could use a component of the classical process just to introduce some noise in the Hamiltonian $H(x)$, or in some of the operators $L_k(x)$, $J(x,u)$. If this component is deterministic, simply we get a time dependent Hamiltonian. By this, we have that at least open loop control can be included in the Markovian hybrid dynamics.

Moreover, we have that the classical subsystem satisfies Eq.\ \eqref{XnewPa} under the physical probability $\Pbb$; so, its dynamics depends on the state of the quantum subsystem through the coefficients $C^{(i)}(t)$. On the other side, the classical component can affect the quantum dynamics through the $x$-dependence of the operators $H(x)$, $L_k(x)$, $J(x,u)$. In this way we have realized a closed loop feedback; applications in quantum optics can be found in \cite{BarG09,WisM10,Jac14}. This opens the question of the connections with a full quantum control theory, where already the quantum trajectory formalism (quantum filtering) started to be explored, see \cite{Ticozzi17,Ticozzi23,Till24} and references therein.

\section{Conclusions}\label{sec:concl}

In the present approach, the hybrid dynamics has been introduced by a couple of stochastic differential equations, one for the classical component and one for the quantum component, and this has been done in two different ways. In the first case a reference probability $\Qbb$ is used, and the two stochastic equations are the evolution equation \eqref{Xprocess} for the classical process and the linear SME \eqref{linearSME} for the (non-normalized) quantum state; the quantum/classical interaction is contained only in the SME \eqref{linearSME}. The physical probability $\Pbb$ is determined by the dynamics itself, and it is defined  in    \eqref{def:pt}, \eqref{physprob}. By suitable hypothesis (Assumptions \ref{Ass:processes}, \ref{ass:classical}, \ref{Ass:1}) it is possible to prove existence and uniqueness of the solution of these coupled equations.

An equivalent description can be given by working under the physical probability $\Pbb$; now we have the stochastic equation \eqref{XnewPa} for the classical process and the non-linear SME \eqref{nonlinearSME} for the quantum states. However, the mathematical setting is more involved as even the probability law depends on  the solution of the two coupled equations. On the other side, in this formulation the quantum/classical interaction is more transparent, as it appears in both equations.

We have also shown that the general axiomatic and the linear structure of a quantum theory is respected. Indeed, we have shown how to connect the whole construction to the general notion of quantum observable (the positive operator valued measures) and to the notion of ``instrument'' (Sec.\ \ref{sec:transinstr}). We have shown that, as expected, the extraction of information from the quantum subsystem is necessarily connected to the presence of suitable dissipation in the dynamics (see the comments in Sec.\ \ref{sec:nprob}, after Eq.\ \eqref{XnewP}). The absence of memory is underlined by the fact that probabilities and dynamics can be summarized in a notion of ``transition instruments'' \eqref{I(E|x)}, a quantum analogue of the transition probabilities in the theory of classical Markov processes.

Moreover, the Markovian character of the dynamics appears also in the fact that it can be expressed as a hybrid dynamical semigroup. However, in the pure classical case it is shown that the classical states can not always be reduced to densities with respect to the Lebesgue measure (\ref{app:classM}); then, the hybrid semigroup needs a $C^*$-algebraic formulation. In any case, we have given its formal generator, which allows to compare our results with less general approaches based on master equations.

Finally, some simple examples on entanglement and control show that we can have a considerable freedom in the construction of physical models,  even under the restriction of a Markovian hybrid dynamics.

\appendix

\section{The classical Markov process under the reference probability $\Qbb$}\label{app:classM}
In this appendix we collect known results on the Markov process defined by \eqref{Xprocess}, when there is no interaction with the quantum component; in particular, we discuss the associated semigroup.
Equation \eqref{Xprocess} is studied in \cite[Sec.\ 6.2]{App09}.

By following \cite[Secs.\ 6.4.1, 6.4.2]{App09}, we can introduce the \emph{stochastic flow} $\Phi$ associated to the stochastic differential equation \eqref{Xprocess}.
For $0\leq s\leq t$, let $\Phi^s_t(x)$ be the solution of \eqref{Xprocess} with initial condition $x$ at time $s$; in particular, with the notation introduced at the end of Proposition \ref{prop:clpr}, we have $\Phi^0_t(x)=X^x(t)$. The solution of \eqref{Xprocess} is an homogeneous Markov process and we have
\begin{equation}\label{transf}
P(t,x,\Eo)\equiv \Qbb[\Phi^0_t(x)\in\Eo] =\Qbb[\Phi^s_{s+t}(x)\in\Eo].
\end{equation}
Then, the transition probabilities \eqref{transprob} satisfy the properties of a \emph{time-homogeneous transition function} \cite[Chapt.\ 4 Sec.\ 1 p.\ 156]{EthK05}: \ (a)
$P(t,x,\cdot)$ is a probability measure on $\Bcal(\Rbb^s)$, $\forall t \in [0,+\infty)$, $\forall x\in \Rbb^s$; \ (b) $P(0,x,\cdot)=\delta_x(\cdot)$, $\forall x\in \Rbb^s$; \ (c) $P(\cdot,\cdot,\Eo)$ is a bounded Borel function on $[0,+\infty)\times \Rbb^s $, $\forall \Eo\in \Bcal(\Rbb^s)$; \ (d) the \emph{Chapman-Kolmogorov identity} holds, which is
\begin{equation}\label{CKeq}
P(t+r,x,\Eo)=\int_{\Rbb^s}P(r,y,\Eo)P(t,x,\rmd y) \qquad r,t\geq 0, \quad x\in\Rbb^s, \quad \Eo\in \Bcal(\Rbb^s).
\end{equation}
We also define the map
\begin{equation}\label{def:Ttcl}
T_t[f](x)=\Ebb_\Qbb\left[f\big(\Phi^0_t(x)\big)\right]=\int_{\Rbb^s}f(y)P(t,x,\rmd y), \qquad f\in B_{\rm b}(\Rbb^s), \quad x\in\Rbb^s;
\end{equation}
$\{T_t,\,t\geq 0\}$ is a semigroup of positive, bounded linear maps from $B_{\rm b}(\Rbb^s)$ into itself, with $T_t(1)=1$, $T_0=\id$. Moreover, $T_t$ maps $C_{\rm b}(\Rbb^s)$ into itself \cite[Note 3 p. 402]{App09}. The expression \eqref{def:Ttcl} is continuous in $t$, for fixed $f$ and $x$.

Let $f$ be in the space $C_{\rm b}^2(\Rbb^s)$ of the two times differentiable functions; \marginlabel{$C_{\rm b}^2(\Rbb^s)$} the functions and their first two  derivatives are continuous and bounded. By \eqref{Xprocess} and It\^o's formula, we get
\begin{multline}\label{dfX}
\rmd f\big(X(t)\big)\Big|_{X(t_-)=x}=\sum_{i=1}^s \frac{\partial f(x)}{\partial x_i}\left( c_i (x)\rmd t +\sum_{k=1}^{d}b_{i}^k(x) \rmd W_k(t)\right)
+\frac 12 \sum_{i,j=1}^s \frac{\partial^2 f(x)}{\partial x_i\partial x_j}\sum_{k=1}^{d}b_{i}^k(x) b_{j}^k(x)\rmd t
\\ {} +\rmd t \int_{\mathsf{U}_0}\left(f\big(x+g(x,u)\big) -f(x)-\sum_{i=1}^s \frac{\partial f(x)}{\partial x_i}\, g_i(x,u)\right)\nu(\rmd u)
\\ {} +\int_{\mathsf{U}_0}\left(f\big(x+g(x,u)\big) -f(x)\right)\widetilde \Pi(\rmd u ,\rmd t) +\int_{\Uscr\setminus \mathsf{U}_0}\left(f\big(x+g(x,u)\big) -f(x)\right) \Pi(\rmd u ,\rmd t).
\end{multline}
Then, we have
\begin{equation}\label{xgenerator}
\frac{\rmd \ }{\rmd t}\,T_t[f](x)= \big(T_t\circ\Kcal^{\rm cl}\big)[f](x),\qquad \forall f\in C_{\rm b}^2(\Rbb^s), \quad \forall x \in \Rbb^s,
\end{equation}
\begin{multline}\label{Kcl}
\Kcal^{\rm cl}[f](x)=\sum_{i=1}^s \frac{\partial f(x)}{\partial x_i}\,c_i (x)+\frac 12 \sum_{i,j=1}^s \frac{\partial^2 f(x)}{\partial x_i\partial x_j}\sum_{k=1}^{d}b_{i}^k(x) b_{j}^k(x)
\\ {} +\int_{\Uscr}\left(f\big(x+g(x,u)\big) -f(x)-\ind_{\mathsf{U}_0}(u)\sum_{i=1}^s \frac{\partial f(x)}{\partial x_i}\, g_i(x,u)\right)\nu(\rmd u).
\end{multline}
Under some additional sufficient assumptions, it is possible to prove that $T_t$ is strongly continuous and that it is generated by $\Kcal^{\rm cl}$, cf.\ \cite[(6.36) p.\ 402]{App09}. We do not need these results, because existence and uniqueness of the solution of \eqref{Xprocess} is enough to give the intrinsic dynamics of the classical component of the hybrid system. In \cite[pp.\ 56, 349-350]{Prott04}, \eqref{xgenerator} is used as definition of ``generator of the time-homogeneous Markov process''; the law of the process is determined by the \emph{Dynkin's expectation formula}
\begin{equation}
\Ebb_\Qbb \left[f\big(X(t)\big)- f\big(X(0)\big)- \int_0^t \Kcal^{\rm cl}[f]\big(X(r)\big)\rmd r\right]=0.
\end{equation}

In some cases the adjoint of $T_t$, defined in the dual Banach space, leaves invariant the space of the probability densities with respect to the Lebesgue measure and some known classical master equations are obtained. In the deterministic case, i.e. $b_{i}^k(x)=0$ and $g(x,u)=0$, we can take $s$ even and the coefficients $c_i(x)$ giving an Hamiltonian evolution; then, the master equation is the Liouville equation. Instead, with only $g(x,u)=0$, we can get the Kolmogorov-Fokker-Planck equation \cite[Sec.\ 3.5.3]{App09}, \cite[Secs.\ 9.3, 14.2.2]{DaPZ14}.

In general, it can happen that $X(t)$ has not a density with respect to Lebesgue measure, even if we start with $X(0)$ having a density. As an example, take the equation $\rmd X(t)= \left[x^0 - X(t_-)\right]\rmd N(t)$ with $x^0\in \Rbb^s$; $X(0)$ is taken with a density and independent of the Poisson process $N$ of intensity $\lambda>0$. Then, we obtain $P[X(t)=x^0]= 1-\rme^{-\lambda t}$ and a density with respect to the Lebesgue measure does not exist for $t>0$.

By generalizing the example above, we can construct models with jumps among fixed points at random times. Let us take $n$ distinct points $x^j\in \Rbb^s$ and $l$ independent Poisson processes $N_k$ of intensities $\lambda_k>0$,
\begin{equation}\label{Diodiscr}\begin{split}
  \rmd X(t)= &\sum_{k=1}^l g\big(X(t_-), k\big)\rmd N_k(t), \qquad g(x,k)=\begin{cases} y^k-x & \text{if } \ x\in \Fo_k, \\ 0 & \text{if } \ x\notin \Fo_k ,\end{cases}
\\
&\Fo_k\in \Bcal(\Rbb^s), \quad \bigcup_{k=1}^l \Fo_k=\Rbb^s, \quad y^k\in \{x^1,\ldots,x^n\}.
\end{split}\end{equation}
After the first jump, the process reduces to a random walk on the points $\{x^1,\ldots,x^n\}$. When the initial condition has a probability concentrated in the set of points $\{x^1,\ldots,x^n\}$, the probability at time $t$ satisfies a \emph{Pauli rate equation} \cite[(13)]{Dio14}, \cite[(14)]{Dio23}.

\section{Concurrence}\label{app:conc}
To quantify the entanglement of two qubits we use the \emph{concurrence} \cite{Woo98}. Let $\varphi\in\Hscr\equiv \Cbb^2\otimes \Cbb^2$ be a generic normalized vector and expand it on the computational basis:
$
\varphi = \varphi_{11}|11\rangle + \varphi_{10}|10\rangle + \varphi_{01}|01\rangle +\varphi_{00}|00\rangle$.
Let $\T$ be the complex conjugation of the coefficients in this basis:
$
\T \varphi = \overline{\varphi_{11}} \,|11\rangle + \overline{\varphi_{10}} \,|10\rangle +
\overline{\varphi_{01}} \,|01\rangle +\overline{\varphi_{00}} \,|00\rangle$.
Then, the \emph{concurrence} $C(\varphi)$  of the pure state
$\varphi$ is defined by
\begin{equation}\label{Cphi}
C(\varphi) : = \abs{  \chi(\varphi)}, \qquad \chi(\varphi) : = \langle \T \varphi | \sigma _y \otimes \sigma _y \varphi \rangle
=2 \left(\varphi_{10}\varphi_{01} - \varphi_{11}\varphi_{00}\right).
\end{equation}
The concurrence goes from 0 (for a separable state) to 1 (for a Bell state).

If $\rho$ is a generic statistical operator, the concurrence is defined by
\begin{equation}\label{Cmixed}
C(\rho): = \inf \sum_{i}p_i C(\psi _{i}),
\end{equation}
where the infimum is taken over all decompositions of $\rho$ into pure states,  $\rho
= \sum_{i}p_i | \psi _i\rangle \langle \psi _i|$.

Finally, let $A$ and $B$ be linear operators on $\Cbb^2$. In studying the dynamics of the concurrence, the following formulae are very useful:
\begin{equation}\label{ABX}\begin{split}
\chi\big((A\otimes B)\varphi\big)&=
\left({\det}_{\Cbb^2} A\right) \left({\det}_{\Cbb^2} B\right) \chi(\varphi),
\\
\langle \T \varphi|(\sigma_y A)\otimes \sigma_y \varphi \rangle &=
\langle \T A\otimes \openone\varphi|\sigma_y \otimes \sigma_y \varphi \rangle =
\frac 1 2 \left( {\Tr}_{\Cbb^2} A\right) \chi(\varphi).\end{split}
\end{equation}


\begin{thebibliography}{99}
\bibitem{DGS00} L. Di\'osi, N. Gisin,  W.T. Strunz, \textsl{Quantum approach to coupling classical and quantum dynamics}, \href{https://journals.aps.org/pra/abstract/10.1103/PhysRevA.61.022108}{Phys. Rev. A \textbf{61} (2000) 022108.}
    %DOI: 10.1103/PhysRevA.61.022108}

\bibitem{Dio14} L. Di\'osi, \textsl{Hybrid quantum-classical master equations}, \href{https://doi.org/10.1088/0031-8949/2014/T163/014004}{Phys. Scr. \textbf{T163} (2014) 014004.} %DOI: 10.1088/0031-8949/2014/T163/014004}

\bibitem{Dio23} L. Di\'osi, \textsl{Hybrid completely positive Markovian quantum-classical dynamics},  \href{https://journals.aps.org/pra/abstract/10.1103/PhysRevA.107.062206}{Phys. Rev. A \textbf{107} (2023)  062206}; \textsl{Erratum}, \href{https://doi.org/10.1103/PhysRevA.108.059902}{Phys. Rev. A \textbf{108} (2023) 059902(E)}.

\bibitem{Op+22} J. Oppenheim, C. Sparaciari, B. Šoda, Z. Weller-Davies, \textsl{The two classes of hybrid classical-quantum dynamics}, \href{https://doi.org/10.48550/arXiv.2203.01332}{arXiv:2203.01332 [quant-ph] (2022)}.

\bibitem{LOW22} I. Layton, J. Oppenheim, Z. Weller-Davies, \textsl{A healthier semi-classical dynamics}, \href{https://doi.org/10.48550/arXiv.2208.11722}{arXiv:2208.11722 [quant-ph]} (2022).

\bibitem{Opp+23} J. Oppenheim, C. Sparaciari, B. Šoda, Z. Weller-Davies, \textsl{Objective trajectories in hybrid classical-quantum dynamics},
\href{https://doi.org/10.22331/q-2023-01-03-891}{Quantum \textbf{7} (2023)  891}. %DOI: 10.22331/q-2023-01-03-891}

\bibitem{DamWer23} L. Dammeier, R.F. Werner, \textsl{Quantum-classical hybrid systems and their quasifree transformations}, \href{https://doi.org/10.22331/q-2023-07-26-1068}{Quantum \textbf{7} (2023) 1068}.
    %; \href{https://arxiv.org/abs/2208.05020}{arxiv:2208.05020 (2022)}.

\bibitem{Sergi+23} A. Sergi, D. Lamberto, A. Migliore, A. Messina, \textsl{Quantum–classical hybrid systems and Ehrenfest's Theorem}, \href{https://doi.org/10.3390/e25040602}{Entropy \textbf{25} (2023) 602}. %DOI: 10.3390/e25040602}

\bibitem{ManRT23} G. Manfredi, A. Rittaud, C. Tronci, \textsl{Hybrid quantum-classical dynamics of pure-dephasing systems}, \href{https://doi.org/10.1088/1751-8121/acc21e}{J. Phys. A: Math. Theor. \textbf{56} (2023) 154002}. %DOI: 10.1088/1751-8121/acc21e}

\bibitem{Tronci23} W. Bauer, P. Bergold, F. Gay-Balmaz, C. Tronci, \textsl{Koopmon trajectories in nonadiabatic quantum-classical dynamics}, \href{https://doi.org/10.48550/arXiv.2312.13878}{arXiv:2312.13878[quant-ph] (2023)}.

\bibitem{Pomar+23} J.L. Alonso, C. Bouthelier-Madre, J. Clemente-Gallardo, D. Mart\'inez-Crespo, J. Pomar, \textsl{Effective nonlinear Ehrenfest hybrid quantum-classical dynamics}, \href{https://doi.org/10.1140/epjp/s13360-023-04266-w}{Eur. Phys. J. Plus \textbf{138} (2023) 649}.
    %DOI: 10.1140/epjp/s13360-023-04266-w}

\bibitem{BriU23} D. Brizuela,  S.F. Uria, \textsl{Hybrid classical-quantum systems in terms of moments}, \href{https://doi.org/10.1103/PhysRevA.109.032209}{Phys. Rev. A \textbf{109} (2024) 032209}.

\bibitem{BarW23}  A. Barchielli, R. Werner, \textsl{Hybrid quantum-classical systems: Quasi-free Markovian dynamics}, \href{https://doi.org/10.1142/S0219749924400021}{Int. J. Quantum Inf. (2024) DOI: 10.1142/S0219749924400021}.

\bibitem{Bar23} A. Barchielli, \textsl{Markovian master equations for quantum-classical hybrid systems}, \href{https://doi.org/10.1016/j.physleta.2023.129230}{Phys. Lett. A \textbf{492} (2023) 129230}.
%DOI: 10.1016/j.physleta.2023.129230} %\href{https://doi.org/10.48550/arXiv.2310.02006}{arXiv:2310.02006 (2023)}.

\bibitem{Bar86} A.~Barchielli, \textsl{Measurement
    theory and stochastic differential equations in quantum mechanics}, \href{http://pra.aps.org/abstract/PRA/v34/i3/p1642_1}{Phys.\ Rev.\ A {\bf 34} (1986) 1642--1649}.

\bibitem{Bel88} V.P. Belavkin, \textsl{Nondemolition measurements, nonlinear filtering and dynamic programming of quantum stochastic processes}. In A. Blaqui\`ere (ed.), \textit{Modelling and Control of Systems}, \href{https://link.springer.com/chapter/10.1007/BFb0041197#citeas}{Lecture Notes in Control and Information Sciences, vol.\ 121 (Springer, Berlin, 1988) pp.\ 245--265}.

\bibitem{Bel89} V.P. Belavkin, \textsl{A new wave equation for a continuous nondemolition measurement}, \href{https://doi.org/10.1016/0375-9601(89)90066-2}{Phys. Lett. A \textbf{140} (1989) 355-358}.

\bibitem{BarB91} A. Barchielli, V.P. Belavkin,    \textsl{Measurements    continuous in time and   a posteriori states in quantum mechanics},    \href{http://iopscience.iop.org/0305-4470/24/7/022}{J.\ Phys.\ A: Math.\ Gen. {\bf24} (1991) 1495--1514.}

\bibitem{Bar93} A. Barchielli,    \textsl{On the    quantum theory of measurements continuous in time}, \href{http://dx.doi.org/10.1016/0034-4877(93)90037-F}{Rep.\ Math.\    Phys.\ {\bf 33} (1993) 21--34.}

\bibitem{ZolG97}  P. Zoller and C.W. Gardiner, \textsl{Quantum noise in quantum optics: the stochastic Schr\"odinger equation}. In S. Reynaud, E. Giacobino \& J. Zinn-Justin eds., \textit{Fluctuations quantiques, (Les Houches 1995)} (North-Holland, Amsterdam, 1997) pp. 79--136.

\bibitem{BarPZ98} A. Barchielli, A.M. Paganoni, F. Zucca, \textsl{On stochastic differential equations and semigroups of probability operators in quantum probability}, \href{http://dx.doi.org/10.1016/S0304-4149(97)00093-8}{Stoch. Proc.\ Appl.\ {\bf73} (1998)}.
    69--86.

\bibitem{Hol01} A.S. Holevo, \textit{Statistical Structure of Quantum Theory}, Lecture Notes in Physics m 67 (Springer, Berlin, 2001).

\bibitem{BarG09} A. Barchielli, M. Gregoratti,     \textit{Quantum Trajectories and Measurements in Continuous Time: The Diffusive Case}, \href{http://www.springer.com/physics/quantum+physics/book/978-3-642-01297-6}{Lect.\ Notes Phys.\ \textbf{782} (Springer, Berlin \& Heidelberg,  2009)}.

\bibitem{WisM10} H.M. Wiseman and G.J. Milburn, \textit{Quantum Measurement and  Control} (Cambridge University Press, Cambridge, 2010).

\bibitem{Bar06} A. Barchielli, \textsl{Continual Measurements in Quantum Mechanics and  Quantum Stochastic Calculus}.    In S. Attal, A. Joye, C.-A. Pillet (eds.),   \textit{Open Quantum Systems III}, \href{http://www.springerlink.com/content/2623j57n1t7w0131/}{Lect.\ Notes Math.\ \textbf{1882}    (Springer, Berlin, 2006), pp. 207--291}.


\bibitem{BGM04} L. Bouten, M. Guţă, H. Maassen, \textsl{Stochastic Schr\"odinger equations}, \href{https://doi.org/10.1088/0305-4470/37/9/010}{ J. Phys. A: Math. Gen. \textbf{37} (2004) 3189.}

\bibitem{Maa23} H. Maassen,  \textsl{Continuous observation of quantum systems}, \href{https://doi.org/10.1142/S0219749924400112}{Int. J. Quantum Inf. (2024) DOI: 10.1142/S0219749924400112}

\bibitem{Till24} A. Tilloy, \textsl{General quantum-classical dynamics as measurement based feedback}, \href{https://doi.org/10.48550/arXiv.2403.19748}{arXiv:2403.19748 [quant-ph] (2024).}

\bibitem{Mim24} M.A. Mimona, M.H. Mobarak, E. Ahmed, F. Kamal,  M. Hasan, \textsl{Nanowires: Exponential speedup in quantum computing}, \href{https://doi.org/10.1016/j.heliyon.2024.e31940}{Heliyon \textbf{10} (2024) e31940}.

\bibitem{OppWD22} J. Oppenheim, Z. Weller-Davies, \textsl{The constraints of post-quantum classical gravity},  \href{https://doi.org/10.1007/JHEP02(2022)080 (2022)}{J. High Energ. Phys. 2022, 80 (2022)}. %DOI: 10.1007/JHEP02(2022)080}

\bibitem{Opp+23grav} I. Layton, J. Oppenheim, A. Russo, Z. Weller-Davies, \textsl{The weak field limit of quantum matter back-reacting on classical spacetime}, \href{https://doi.org/10.1007/JHEP08(2023)163}{J. High Energ. Phys. 2023, 163 (2023)}. %DOI: 10.1007/JHEP08(2023)163}

\bibitem{ProsB23}    G.M. Prosperi, M. Baldicchi, \textit{Interpretation of Quantum Theory and Cosmology}, \href{https://doi.org/10.48550/arXiv.2304.07095}{arXiv:2304.07095 (2023)}

\bibitem{HallR18} M.J.W. Hall, M. Reginatto, \textsl{On two recent proposals for witnessing nonclassical gravity}, \href{https://iopscience.iop.org/article/10.1088/1751-8121/aaa734}{J.  Phys. A: Math. Theor.
    \textbf{51} (2018) 085303}. %DOI: 10.1088/1751-8121/aaa734}

\bibitem{LoFC18} R. Lo Franco, G. Compagno, \textsl{Overview on the phenomenon of two-qubit entanglement revivals in classical environments}, in F. Fernandes Fanchini, D. de Oliveira Soares Pinto, G. Adesso, \textit{Lectures on General Quantum Correlations and their Applications} \href{https://link.springer.com/chapter/10.1007/978-3-319-53412-1_17}{(Springer, Cham, 2017) pp.\ 367--391}.
    %DOI: 10.1007/978-3-319-53412-1 17}

\bibitem{BarGent} A. Barchielli, M. Gregoratti, \textsl{Entanglement protection and    generation under continuous monitoring}, in L.\ Accardi, F.\ Fagnola, \textit{Quantum    Probability and Related Topics}, QP-PQ: Quantum Probability and White    Noise Analysis, \href{http://dx.doi.org/10.1142/9789814447546_0002}{Vol.\ 29, (World Scientific, Singapore, 2013) pp.\     17--42}.
    %DOI:    10.1142/9789814447546\_0002}

\bibitem{BarH95} A. Barchielli, A.S. Holevo, \textsl{Constructing
    quantum measurement processes via classical stochastic calculus}, \href{http://dx.doi.org/10.1016/0304-4149(95)00011-U}{Stoch. Proc. Appl. {\bf 58} (1995) 293--317}.

\bibitem{BarG12}  A. Barchielli, M. Gregoratti, \textsl{Quantum measurements in continuous time, non-Markovian evolutions and feedback}, \href{http://rsta.royalsocietypublishing.org/content/370/1979/5364.abstract}{Phil. Trans.   R. Soc. A \textbf{370} no. 1979 (2012) 5364--5385}.
    %DOI:    10.1098/rsta.2011.0515}.

\bibitem{Met82} M. M\'etivier, \textit{Semimartingales, a Course on Stochastic Processes} (W. de Gruyter, Berlin, 1982).

\bibitem{IWat89} N. Ikeda, S. Watanabe, \textit{Stochastic Differential Equations and Diffusion Processes}, Second Edition (North-Holland Publishing Company, Amsterdam, 1989).

\bibitem{LipSmart}
R.Sh. Liptser, A.N. Shiryayev, \textit{Theory of Martingales}, (Kluwer Academic Publishers, Dordrecht, 1986).

\bibitem{DaPZ14} G. Da Prato, J. Zabczyk, \textit{Stochastic Equations in Infinite Dimensions}, II Edition (Cambridge University Press, Cambridge, 2014).

\bibitem{Prott04} P.E. Protter, \textit{Stochastic Integration and Differential Equations}, II Edition (Springer, Berlin, 2004)

\bibitem{EthK05} S.N. Ethier, T.G. Kurtz, \textit{Markov Processes --- Characterization and Convergence} (Wiley, Hoboken, New Jersey, 2005).

\bibitem{App09} D. Applebaum, \textit{L\'evy Processes and Stochastic Calculus},
Second Edition (Cambridge University Press, 2009).

\bibitem{Sato99} K. Sato, \textit{L\'evy processes and infinitely divisible distributions} (Cambridge University Press, Cambridge, 1999)

\bibitem{Hol88QPIII} A.S. Holevo, \textsl{A noncommutative generalization of conditionally positive definite functions}, in L.~Accardi, W.~von Waldenfels (eds.), \textit{Quantum Probability and Applications III}. \href{https://link.springer.com/chapter/10.1007/BFb0078059}{Lecture Notes in  Mathematics, Vol.\ 1303 (Springer, Berlin, 1988), pp.\ 128--148}. %DOI: 10.1007/BFb0078059}

\bibitem{LoFBen14} A. D'Arrigo, R. Lo Franco, G. Benenti, E. Paladino, G. Falci, \textsl{Recovering entanglement by local operations}, \href{https://doi.org/10.1016/j.aop.2014.07.021}{Ann. Phys. 350 (2014) 211--224}. %DOI: 10.1016/j.aop.2014.07.021}

\bibitem{BarL08} A. Barchielli, G. Lupieri, \textsl{Information  gain in quantum continual measurements}, in V.P. Belavkin and M.
    Gu\c{t}\v{a},  \textit{Quantum    Stochastic and Information} \href{http://dx.doi.org/10.1142/9789812832962_0015}{(World Scientific, Singapore, 2008) pp.\    325--345}.

\bibitem{Xi08}J. Xiong, \textit{An Introduction to Stochastic Filtering Theory}, vol. 18, OUP Oxford, 2008.

\bibitem{Hol12} A.S. Holevo, \textit{Quantum Systems, Channels, Information} (De Gruyter, Berlin, 2012).

\bibitem{Hol86} A.S. Holevo,  \textsl{Conditionally positive definite functions in quantum probability}, Proc. Int. Cong. Math. 1011-1020 (1986) Berkeley.

\bibitem{Hol89} A.S. Holevo, \textsl{Limit theorems for repeated measurements and continuous measurement processes}, in L. Accardi, W. von Waldenfels (eds.), \textit{Quantum    Probability and Applications IV}, \href{https://link.springer.com/chapter/10.1007/BFb0083555}{Lecture Notes in Mathematics \textbf{1396}  (Springer,
    Berlin, 1989) pp. 229--257}. %DOI: 10.1007/BFb0083555}

\bibitem{Bar89} A. Barchielli, \textsl{Probability operators and convolution semigroups of  instruments in quantum
    probability}, \href{https://link.springer.com/article/10.1007/BF00340008}{Probab.\ Theory Rel.\ Fields {\bf 82} (1989) 1--8}.
    %DOI: 10.1007/BF00340008}

\bibitem{BarL91}  A. Barchielli, G. Lupieri, \textsl{A quantum analogue of Hunt's representation theorem for the generator of convolution semigroups on Lie groups}, \href{https://doi.org/10.1007/BF01212558}{Probab.\ Theory Rel.\ Fields \textbf{88} (1991) 167--194}.
    %DOI: 10.1007/BF01212558}

\bibitem{BarHL93} A. Barchielli, A.S. Holevo, G. Lupieri,  \textsl{An  analogue of Hunt's representation theorem in quantum probability}, \href{https://link.springer.com/article/10.1007/BF01047573} {J.\  Theor.\ Probab.\ {\bf 6} (1993) 231--265}. %DOI: 10.1007/BF01047573}

\bibitem{BarP96} A. Barchielli, A.M. Paganoni,   \textsl{A  note on a formula of L\'evy--Khinchin type in quantum probability}, \href{https://doi.org/10.1017/S0027763000005511} {Nagoya Math.\ J.\ {\bf141} (1996) 29--43}. %DOI: 10.1017/S0027763000005511}

\bibitem{DVV79} B. Demoen, P. Vanheuverzwijn, A. Verbeure, \textsl{Completely positive quasi-free maps of the CCR-algebra}, \href{https://doi.org/10.1016/0034-4877(79)90049-1}{Rep.\ Math.\ Phys.\ \textbf{15} (1979) 27--39}. %DOI: 10.1016/0034-4877(79)90049-1}

\bibitem{Jac14} K. Jacobs, \textit{Quantum Measurement Theory and its Applications} \href{https://doi.org/10.1017/CBO9781139179027}{(Cambridge University Press, 2014)}.

\bibitem{VogS10} S. Vogelsberger, D. Spehner, \textsl{Average entanglement for  Markovian quantum trajectories}, \href{https://doi.org/10.1103/PhysRevA.82.052327}{Phys. Rev. A \textbf{82} (2010) 052327}. %DOI: 10.1103/PhysRevA.82.052327}

\bibitem{VGCBB10} C. Viviescas, I. Guevara, A.R.R. Carvalho, M. Busse, A. Buchleitner, \textsl{Entanglement dynamics in open two-qubit systems via
    diffusive quantum trajectories}, \href{https://doi.org/10.1103/PhysRevLett.105.210502}{Phys. Rev. Lett. \textbf{105} (2010) 210502}.
    %DOI: 10.1103/PhysRevLett.105.210502}

\bibitem{MCVF11} E.  Mascarenhas, D. Cavalcanti, V. Vedral, M. Fran\c ca Santos, \textsl{Physically realizable entanglement by local continuous
    measurements}, \href{https://doi.org/10.1103/PhysRevA.83.022311}{Phys. Rev. A \textbf{83} (2011) 022311}. %DOI: 10.1103/PhysRevA.83.022311}

\bibitem{Woo98} W.K. Wootters, \textsl{Entanglement of formation of an arbitrary state of two qubits}, \href{https://doi.org/10.1103/PhysRevLett.80.2245}{Phys. Rev. Lett. \textbf{80} (1998) 2245-2248}.


\bibitem{Ticozzi17} T. Benoist, C. Pellegrini, F. Ticozzi, \textsl{Exponential stability of subspaces for
 quantum stochastic master equations}, \href{http://dx.doi.org/10.1007/s00023-017-0556-3}{Ann. Henri Poincar\'e 18 (2017) 2045–2074}.

\bibitem{Ticozzi23} W. Liang, K. Ohki, F. Ticozzi, \textsl{On the Robustness of Stability for Quantum Stochastic Systems}, \href{http://dx.doi.org/10.1109/CDC49753.2023.10383791}{62nd IEEE Conference on Decision and Control (CDC) (2023), pp.\ 7202-7207}.
\end{thebibliography}
\end{document}